\numberwithin{equation}{section}	
\theoremstyle{plain}
\newtheorem{example}{Example}[section]
\theoremstyle{plain}
\newtheorem{remark}{Remark}[section]
\theoremstyle{plain}
\newtheorem{prop}{Proposition}[section]
\theoremstyle{plain}
\theoremstyle{plain}
\newtheorem{theorem}{Theorem}[section]
\theoremstyle{plain}
\newtheorem{corollary}{Corollary}[section]
\theoremstyle{plain}
\newtheorem{definition}{Definition}[section]
\theoremstyle{assumption}
\newtheorem{assumption}{Assumption}[section]
\theoremstyle{plain}
\theoremstyle{plain}
\newtheorem{notation}{Notation}[section]
\DeclarePairedDelimiter{\abs}{\lvert}{\rvert}
\titleformat{\section}[block]{\large\bfseries\centering}{\thesection.}{1em}{} 
\titleformat{\subsection}[block]{\large\bfseries}{\thesubsection.}{1em}{} 
\title{A Stochastic Control Approach to Public Debt Management} 
\author{%
\textsc{Brachetta M.} \thanks{Department of Mathematics, Politecnico of Milan, Piazza Leonardo da Vinci, 32 - 20133 Milan. Corresponding author. ORCID: 0000-0001-6940-4749.} \\[1ex]
\normalsize \href{mailto:matteo.brachetta@unich.it}{matteo.brachetta@polimi.it}
\and
\textsc{Ceci, C.}\thanks{Department of Economics, University of Chieti-Pescara, Viale Pindaro, 42 - 65127 Pescara, Italy. ORCID: 0000-0003-2796-6588.}\footnotemark[1] \\[1ex]
\normalsize \href{mailto:c.ceci@unich.it}{c.ceci@unich.it}
}
\date{} 
\providecommand{\keywords}[1]{\textbf{\textit{Keywords:}} #1}
\providecommand{\jelcodes}[1]{\textbf{\textit{JEL Classification codes:}} #1}
\begin{document}
\maketitle

\begin{abstract}
\noindent 
We discuss a class of debt management problems in a stochastic environment model.  We propose a model for the debt-to-GDP (Gross Domestic Product) ratio  where the government interventions via fiscal policies affect the public debt and the GDP growth rate at the same time. We allow for stochastic interest rate and possible correlation with the GDP growth rate through the dependence of both the processes (interest rate and GDP growth rate) on a stochastic factor which may represent any relevant macroeconomic variable, such as the state of economy. We tackle the problem of a government whose goal is to determine the fiscal policy in order to minimize a general functional cost. We prove that the value function is a viscosity solution to the Hamilton-Jacobi-Bellman equation and provide a Verification Theorem based on classical solutions. We investigate the form of the candidate optimal fiscal policy in many cases of interest, providing interesting policy insights. Finally, we discuss two applications to the debt reduction problem and debt smoothing, providing explicit expressions of the value function and the optimal policy in some special cases.
\end{abstract}

\noindent\keywords{Optimal stochastic control; government debt management; optimal fiscal policy; Hamilton-Jacobi-Bellman equation.}\\
\noindent\jelcodes{C02, C61, H63, E62.}\\


\section{Introduction}

Public debt management is a wide and complex topic in Economics. On the one hand, it is recognized the important role played by public debt in welfare improving, for example as a tax smoothing tool (starting from the seminal paper \cite{barro79}) or as savings absorber (see \cite{domar1944} and \cite{necksturm2008} among others). On the other hand, the current high levels of debt in some developed countries has drawn the attention of many economists, especially because of possible effects on future taxation levels (see the quoted papers \cite{domar1944} and \cite{necksturm2008}).

The question of debt management, in broad sense, is up for discussion essentially because ``we simply do not have a theory of the optimum debt level'' (see \cite{Wyplosz2005}). However, in the very recent years some authors gave a rigorous mathematical formulation of a debt management problem, namely the optimal debt ceiling (that is the debt-to-GDP  (Gross Domestic Product) ratio level at which the government should intervene in order to reduce it), see \cite{CA2016, CA2018, ferrari2018, ferrari2020, CCF2020}.

These works focus especially on the debt reduction problem of developed countries, where the government aims at reducing the debt-to-GDP ratio through the minimization of two costs: the cost of holding debt and the intervention cost (reducing public spending or increasing taxes).  This study is motivated by the fact that concurrently with the financial crisis started in 2007, the debt-to-GDP ratios exploded.

In \cite{CA2016, CA2018, ferrari2018, CCF2020} the possibility for the government to increase the level of debt ratio is precluded (fiscal deficit is not allowed) and any possible benefit resulting from holding debt is neglected. In practice, sometimes debt reduction policies might not be appropriate, since public investments in infrastructure, healthcare, education and research induce social and financial benefits (see \cite{Ostryetal}). In \cite{ferrari2020} policies of debt reduction and debt expansion are accounted by modeling controls of bounded variation and introducing, in addition to the cost of reduction policies, a benefit associated to expansionary policies.

Moreover, it is a well known evidence that 
policies of debt expansion (deficit) induce also an increase of the GDP growth rate of the country (which in turn could imply a reduction of the debt ratio) and this phenomenon is not exploited in the existing mathematical literature on the topic. This paper wishes to be a first effort to fill this lack.\\

We provide a rigorous mathematical formulation for a class of debt management problems, which are modeled as stochastic control problems. In particular, we tackle the problem of a government whose goal is to determine the fiscal policy in order to minimize a general functional cost, which depends on the debt-to-GDP ratio, an external driver (e.g. the state of economy) and the fiscal policy itself.
 
The main improvement of our debt-to-GDP model is that the GDP growth rate depends on the fiscal policy.  In classical models, the GDP growth rate is assumed to be constant, see for instance  \cite{CA2016, CA2018} and references therein. In \cite{ferrari2018, ferrari2020, CCF2020} it is allowed to be a stochastic process, which may be modulated by an unobservable continuous time Markov chain as in \cite{CCF2020}. However, in all these models the government's interventions via fiscal measures do not affect the GDP growth rate, 
any policy of surplus decreases the debt ratio (see \cite{CA2016, CA2018, ferrari2018, CCF2020}), whereas any deficit policy increases  it (as in \cite{ferrari2020}). In the reality, the effects of the fiscal policy on the debt-to-GDP ratio is more complex.
 
For example, we might assume that the GDP growth rate decreases when the government increases its surplus (which is mostly the case in normal conditions). In this scenario, the gross debt will decrease as well, but the final effect on debt-to-GDP ratio is not unidirectional, contrary to classical models and the above quoted papers, where any surplus translates into debt ratio reduction. Indeed, the economics literature recognizes the possibility of achieving debt reduction via deficit policies, see for instance \cite{domar1944,delong2012,fatas2018}.

Moreover, we assume that both the interest rate on debt and the GDP growth rate are affected by a stochastic factor $Z$, which may represent the state of economy, the economic outlook, or any other macroeconomic variable.  The presence of this external driver $Z$ induces a correlation between the GDP growth rate and the interest rate, which is a well known phenomenon. Furthermore, we assume that $Z$ and the GDP are driven by two correlated Brownian motion, so that we introduce an additional type of dependence. In practice, the macroeconomic conditions described by $Z$ and the GDP have a common source of uncertainty. 

In this framework we solve the problem of minimizing the expected total cost over an infinite time horizon. As mentioned above, the functional cost depends on the dynamics of the debt-to-GDP ratio, the exogenous factor $Z$ and the fiscal policy. This is a very flexible and general problem formulation, which includes many debt management problems as special cases. For instance, we explore the applications to debt reduction and debt smoothing, which is the minimization of the distance between the current debt and a given threshold.\\

The mathematical contributions of the paper are the following. We rigorously derive the debt ratio dynamics, which is controlled by the government's interventions (which include surplus and deficit strategies) and formulate the arising stochastic control problem. Under general assumptions on the functional cost, we prove that the value function is a continuous viscosity solution to the Hamilton-Jacobi-Bellman (HJB) equation and provide a Verification Theorem which applies whenever the HJB equation has a classical solution. Next, we investigate the structure of the candidate optimal fiscal policy, discussing some cases of interest. In particular, when the functional cost is increasing on the debt ratio and does not explicitly depend by the control, we find that the optimal strategy depends only on the effect of the fiscal policy on the GDP growth rate, not by the current level of debt-to-GDP. A similar result was obtained in a different context by \cite{Annicchiarico_etal}.
Thus it becomes crucial understanding how the government's interventions via fiscal policies can influence the GDP growth rate of the country. To find more explicit solutions, we consider the example of a linear impact.  Finally, we discuss two applications, namely debt reduction and debt smoothing 
and provide a numerical simulation for the second one. \\

The paper is organized as follows.  In Section \ref{section:model} we propose and motivate our model formulation. In Section \ref{section:pb} we illustrate the optimization problem. In Section \ref{properties} we provide some properties of the value function. In Section \ref{characterization} we prove that the value function is 
a solution in the viscosity sense to the Hamilton-Jacobi-Bellman equation and we provide a Verification Theorem based on classical solution. In Section \ref{optimal_stra} we discuss the minimization problem involved in the HJB equation. Finally, some special cases of interest are discussed in Section \ref{Applications}. 


\section{Model formulation}
\label{section:model}

We propose a stochastic model for the gross public debt and the gross domestic product (GDP) in presence of correlation between the interest rate on debt and the economic growth. Moreover, the government interventions through fiscal policies affect the GDP growth rate and the public debt at the same time.

Consider a complete probability space $(\Omega,\mathcal{F},\mathbb{Q},\mathbb{F})$ endowed with a complete and right continuous filtration $ \mathbb{F} := \{\mathcal{F}_t\}_{t \geq 0}$. Let $D = \{D_t\}_{t\geq0}$  denote the gross public debt process and $Y = \{Y_t\}_{t\geq 0}$ the gross domestic product process of a country.
According to classical models in economics literature (see \cite{Blanchard85} among others), the sovereign debt stock evolves as:
\[
dD_t = rD_t\,dt - d\xi_t \;, \quad D_0>0 \;,
\]
where $r>0$ denotes the real interest rate on debt and $\xi$ is the fiscal policy, with the convention that positive values correspond to primary surplus, while negative values represent deficit.  
We extend the model by introducing  a stochastic interest rate of the form $r_t= r(Z_t)$, with $r$ a positive measurable function and $Z=\{Z_t\}_{t>0}$ a stochastic factor described by the following stochastic differential equation (SDE):
\begin{equation}
\label{eqn:Z}
dZ_t = b_Z(Z_t)\,dt + \sigma_Z(Z_t)\,dW^Z_t \;, \qquad Z_0=z\in\mathbb{R} \;,
\end{equation}
where $W^Z = \{ W^Z_t\}_{t \geq 0}$ is a standard Brownian motion. We assume existence and strong uniqueness of the solution to the SDE \eqref{eqn:Z}. 
The process $Z$ describes any stochastic factor, such as underlying macroeconomic conditions, which affects the interest rate on debt of the country. 

Let $u = \{u_t\}_{t\geq0}$ be the rate of the primary balance expressed in terms of the debt, that is $d\xi_t= u_t D_t dt$. We assume that the real GDP growth rate at time $t$ is of the form  $g(Z_t,u_t)$, with $g(z,u)$ being a measurable function of its arguments. Precisely, the pair $(D,Y)$ follows  
\begin{equation}\label{sistema}
\begin{cases}
	dD_t = D_t(r(Z_t)-u_t)\,dt \;, \quad D_0>0 \;,
	\\
	dY_t = Y_t(g(Z_t,u_t)\,dt +\sigma\,d\widetilde W_t) \;, \quad Y_0>0 \;,
\end{cases}
\end{equation}
where $\widetilde W=\{\widetilde W_t\}_{t>0}$ a standard Brownian motion correlated with $W^Z$ and $\sigma>0$ is the GDP volatility.

The debt-to-GDP ratio $X=\{X_t = \frac{D_t}{Y_t} \}_{t\ge0}$ dynamics can be derived by It\^o's formula:
\[
\begin{split}
dX_t &= \frac{1}{Y_t} \big (r(Z_t)- u_t \big ) D_t dt -\frac{D_t}{Y_t^2}Y_t(g(Z_t,u_t)\,dt + \sigma\,d\widetilde W_t) 
+ \frac{D_t}{Y_t^3}\sigma^2Y_t^2\,dt \\
&= X_t \big (r(Z_t) -g(Z_t,u_t) -u_t\big ) \,dt + X_t\sigma(\sigma\,dt-d\widetilde W_t)   \;, \qquad X_0=\frac{D_0}{Y_0} \;.
\end{split}
\]
Now we can introduce a new measure $\mathbb{P}$, equivalent to $\mathbb{Q}$, such that 
$W=\{W _t\}_{t\geq0}= \{\sigma t - \widetilde W_t\}_{t\geq0}$ is a $\mathbb{P}$-Brownian motion and $W^Z$ remains a $\mathbb{P}$-Brownian motion.

Hence, under $\mathbb{P}$, the debt ratio $X^{u,x}=\{X^{u,x}_t\}_{t\geq0}$ is a controlled process which solves the following  SDE:
\begin{equation}
\label{eqn:debt}
dX^{u,x}_t = X^{u,x}_t [(r(Z_t)-g(Z_t,u_t)-u_t)\,dt + \sigma\,dW_t] \;, \qquad X^{u,x}_0=x \;,
\end{equation}
where $x>0$ is the initial debt-to-GDP ratio and the control $u= \{u_t\}_{t\geq0}$ denotes the fiscal policy (i.e. the ratio of primary surplus to gross debt).

Let us remark some aspects of our model, which is different from those usually introduced in the existing literature:
\begin{enumerate}
	\item This model extends the one considered in \cite{CA2016} and \cite{CA2018} where $r$ and $g$ are constant.  Our main improvement is that the GDP growth rate is now affected by the fiscal policy. For instance, if we take a function $g$ decreasing in $u$, we capture a well known effect: when the government generates surplus, the debt stock is reduced and, at the same time, the GDP is so. Hence the final effect on the debt-to-GDP ratio is not unidirectional as in the models proposed in \cite{CA2016, CA2018, ferrari2018}  and \cite{CCF2020}, where any surplus translates to debt-to-GDP reduction. 
	\item Another property of our model is that the interest rate and the GDP  growth rate have a common source of uncertainty, which is modeled through the external driver $Z$. In many applications $Z$ could represent the economic environment, the state of the economy, the economic outlook.  
The pathwise measurement of covariance between the interest rate and the GDP  growth rate  is given by the covariation between the two processes. When the functions $r$ and $g$ are sufficiently regular w.r.t. $z \in \mathbb R$, from It\^o's formula it  can be  computed   for any fixed $u$:
$$< r(Z_t), g(Z_t,u)> = \int_0^t \frac{\partial{g}}{\partial{z}}(Z_s, u) r'(Z_s) \sigma^2_Z(Z_s) ds \;.$$
In particular, this implies that 
$$\mathbb{E}[r(Z_t)g(Z_t,u)] =\mathbb{E}[ \int_0^t \frac{\partial{g}}{\partial{z}}(Z_s, u) r'(Z_s) \sigma^2_Z(Z_s) ds ] \;.$$
	\item Moreover, our model takes into account possible correlation between the Brownian motions $W^Z$ and $W$ driving the dynamics of the environmental stochastic factor process $Z$ and the debt-to-GDP process $X$, respectively. For instance, there could be a common source of uncertainty between debt-to-GDP and macroeconomics conditions.
\end{enumerate}


In the sequel we assume the following hypotheses.

\begin{assumption}
\label{ass:1}
The following assumptions are required in the sequel:
\begin{itemize}
\item $r\colon\mathbb{R}\to(0,+\infty)$, i.e. the interest rate on debt, is such that $r(z)\le R$ $\forall z\in\mathbb{R}$ for a given constant $R>0$; 
\item $u_t$, i.e. the surplus (or deficit)-to-debt ratio at time $t>0$, takes values in a compact set $[-U_1,U_2]$, where $U_1>0$ denotes the maximum allowed deficit-to-debt ratio and $U_2>0$ is the maximum surplus-to-debt; 
\item $g\colon\mathbb{R}\times[-U_1,U_2]\to\mathbb{R}$, which represents the GDP growth rate, is bounded by
\begin{equation}
\label{g_bounded}
\bar{g}_1 \le g(z,u) \le \bar{g}_2 \qquad \forall (z,u)\in\mathbb{R}\times[-U_1,U_2] \;,
\end{equation}
for some suitable constants $\bar{g}_1<0<\bar{g}_2$.
\end{itemize}
\end{assumption}

\begin{example} \label{example:g}   
We consider the case where the fiscal policy has a linear impact on the GDP growth rate, precisely

\begin{equation} \label{linear1}
g(z,u) = g_0(z) - \alpha(z) u \;,
\end{equation}
with $g_0\colon\mathbb{R}\to\mathbb{R}$, $\alpha \colon\mathbb{R}\to (0, + \infty)$ measurable and bounded functions.  
The process $\{g_0(Z_t)\}_{t\geq 0}$ is the GDP growth rate when no intervention is considered and depends on the environment stochastic factor $Z$. 
Any positive intervention $u_t >0$  leads to a reduction of the GDP growth rate $g(Z_t,u_t)$. Conversely, any negative intervention $u_t <0$ leads to an increase of the GDP growth rate $g(Z_t,u_t)$. Both the effects are modulated by the environment stochastic factor $Z$ via the coefficient $\alpha(Z_t)$. 
A simplified model can be obtained with $\alpha(z)=\alpha \neq 1$ $\forall z \in \mathbb{R}$, that is  
\begin{equation}
\label{linear}
g(z,u) = g_0(z) - \alpha u \;.
\end{equation}
In this special case, equation \eqref{eqn:debt} reduces to
\begin{equation}
\label{eqn:debt1}
dX^{u,x}_t = X^{u,x}_t [(r(Z_t)- g_0(Z_t ) - (1 -\alpha) u_t)\,dt + \sigma\,dW_t] \;, \qquad X^{u,x}_0=x \;.
\end{equation}
Clearly, $0< \alpha <1$ means that the effect of the government fiscal policy with $u_t >0$  (surplus)  leads to a reduction of the instantaneous debt-to-GDP  growth rate, while $\alpha>1$ leads to a reduction of the instantaneous debt-to-GDP growth rate when $u_t <0$ (deficit). This occurs because, for $0< \alpha <1$, the government fiscal policy has a smaller effect on the GDP growth rate w.r.t. the debt growth rate (see the first equation in \eqref{sistema}), while for $\alpha >1$  the fiscal policy has a larger effect on the GDP growth rate thank on debt growth rate. 
\end{example}

For the sake of notation simplicity we make use of the following convention.
\begin{notation}
The dynamics of $X^{u,x}$, see equation \eqref{eqn:debt}, must be assigned together with equation \eqref{eqn:Z}. However, when the expectations are considered we will neglect the conditioning to $Z_0=z$, so that the two notations below are equivalent for us:
\[
\mathbb{E}[X^{u,x}_t \mid Z_0=z] = \mathbb{E}[X^{u,x}_t]  \;.
\]
We will use the left hand notation either to specify a different initial value of $Z$ or to emphasize the dependence of $Z$.
\end{notation}

 \begin{remark}\label{remark:Xdiseq}
Let us observe that the SDE \eqref{eqn:debt} admits an explicit solution:
\begin{equation}
\label{eqn:Xsol}
X^{u,x}_t = xe^{\int_0^t (r(Z_s)-g(Z_s,u_s)-u_s)\,ds - \frac{1}{2}\sigma^2t}e^{\sigma W_t} \quad \forall t \geq 0, \;  \mathbb{P}-a.s. \;.
\end{equation}
Thus by  Assumption \ref{ass:1}, for any $m>0$ we have that
\[
x^m e^{-m(\bar{g}_2 + U_2) t} e^{- \frac{m}{2}\sigma^2t + m\sigma W_t} \le 
(X^{u,x}_t)^m \le x^m e^{m(R - \bar{g}_1 + U_1) t} e^{- \frac{m}{2}\sigma^2t + m\sigma W_t} \quad \forall t \geq 0, \;  \mathbb{P}-a.s.
\]
from which we get this estimation:
\begin{equation}
\label{eqn:EXm_finite}
\mathbb{E}[(X^{u,x}_t)^m] \le x^m e^{\lambda_m t} \quad \forall t \geq 0\;,
\end{equation}
where
\begin{equation}
\label{eqn:lambda}
\lambda_m \doteq m(R - \bar{g}_1 + U_1) + m(m-1)\frac{\sigma^2}{2} \;,
\end{equation}
\end{remark}
 
\begin{remark}
We can show that the condition $r(z)-g(z,0)\ge G$ $\forall z\in\mathbb{R}$, for some constant $G>0$, implies an explosive debt-to-GDP ratio when no intervention is considered. Indeed, by equation \eqref{eqn:Xsol} we have that
\begin{equation} \label{eqn:X0}
X^{0,x}_t = xe^{\int_0^t (r(Z_s)-g(Z_s,0))\,ds}e^{- \frac{1}{2}\sigma^2t + \sigma W_t} \ge xe^{Gt}M_t \end{equation}
where $M_t = e^{- \frac{1}{2}\sigma^2t + \sigma W_t}$ denotes the well known exponential martingale with  $\mathbb{E}[M_t]=1$ $\forall t \geq 0$. Hence 
\[
\lim_{t \to + \infty} \mathbb{E}[X^{0,x}_t] \geq  \lim_{t \to + \infty} xe^{Gt} = + \infty \;,
\]
which implies $\lim_{t \to + \infty} X^{0,x}_t = + \infty$ $\mathbb{P}-a.s.$.
However, when $r$ and $g$ are constant functions, the debt explodes if $r-g>0$, which is a popular result in economics literature.
\end{remark}

An important feature for a country is to apply fiscal policies which are sustainable. An explosive debt-to-GDP ratio is not a problem for a country if the discounted debt-to-GDP ratio w.r.t. the  interest rate on debt converges to $0$. The following definition is standard in Economics (see e.g. \cite{FINCKE2011}).
\begin{definition}
A fiscal policy $u$ is called sustainable if it realizes
\begin{equation}
\label{eqn:sustainable}
\lim_{t\to+\infty}e^{-\int_0^t r(Z_s)\,ds}X^{u,x}_t =0 \qquad \mathbb{P}-a.s.\;,
\end{equation}
see e.g. \cite{Blanchard85}.
\end{definition}



\begin{remark}\label{sustU}
 The bounds $U_i$, $i=1,2$, depend by structural economic and political characteristics of the country, in general.  It seems reasonable to choose the maximum level of  deficit and surplus, $U_i$, $i=1,2$, respectively, by imposing that the fiscal policies identically equal to these maximum levels (i.e. $u^1_t = - U_1$ and $u^2_t = U_2$ $\forall t \geq 0$) turn out to be sustainable for the country, i.e.
 \begin{equation}
\lim_{t\to+\infty}e^{-\int_0^t r(Z_s)\,ds}X^{-U_1,x}_t =0 \;,  \quad  \lim_{t\to+\infty}e^{-\int_0^t r(Z_s)\,ds}X^{U_2,x}_t =0 \quad \mathbb{P}-a.s. \;.
\end{equation}
For instance, in case of linear GDP growth rate as in \eqref{linear1}, we get that 
$$\mathbb{E}[ e^{-\int_0^t r(Z_s)\,ds}X^{-U_1,x}_t ] = x \mathbb{E}[ e^{-\int_0^t (g_0(Z_s) - U_1(1 -\alpha)) \,ds} M_t] $$
and
$$\mathbb{E}[ e^{-\int_0^t r(Z_s)\,ds}X^{U_2,x}_t ] = x \mathbb{E}[ e^{-\int_0^t (g_0(Z_s) + U_2(1- \alpha)) \,ds} M_t]$$
(here we recall that $M_t = e^{- \frac{1}{2}\sigma^2t + \sigma W_t}$).
Let $\underline g_0= \min_{z \in \mathbb R}  g_0(z)$ we get 
$$\mathbb{E}[ e^{-\int_0^t r(Z_s)\,ds}X^{-U_1,x}_t ] \leq x e^{- (\underline g_0 - U_1(1 -\alpha))t} \; , $$
$$\mathbb{E}[ e^{-\int_0^t r(Z_s)\,ds}X^{U_2,x}_t ] \leq x e^{- (\underline g_0 + U_2(1- \alpha))t} \; , $$
which imply that if $U_1$ and $U_2$ satisfy
$$U_1(1-\alpha) < \underline g_0, \quad U_2(\alpha -1) < \underline g_0 \; , $$
both the fiscal policies $u^1_t = - U_1$ $\forall t \geq 0$ and $u^2_t = U_2$ $\forall t \geq 0$ are sustainable.
The economic interpretation is clear: when the maximum deficit positive impact on GDP growth is greater than the effect on debt, then $-U_1$ is sustainable. Similarly, when the negative effect of the maximum surplus on GDP is surpassed by the positive effect on debt reduction, then $U_2$ becomes sustainable.
\end{remark}


\section{Problem formulation}
\label{section:pb}

Let us recall the dynamics of debt-to-GDP ratio:
\[
\begin{cases}
	dX^{u,x}_t = X^{u,x}_t [(r(Z_t)-g(Z_t,u_t)-u_t)\,dt + \sigma\,dW_t] \;, & X^{u,x}_0=x \;,
	\\
	dZ_t = b_Z(Z_t)\,dt + \sigma_Z(Z_t)\,dW^Z_t \;, & Z_0=z\in\mathbb{R} \;.
\end{cases}
\]

We denote by $\rho\in[-1,1]$ the correlation coefficient between $W$ and $W^Z$.\\

We consider the problem of a government which wants to choose the fiscal policy in order to optimally manage the sovereign debt. For this purpose we introduce the class of admissible fiscal strategies.

\begin{definition}[Admissible fiscal policies]
\label{def:admissible}
We denote by $\mathcal{U}$ the family of all the $\mathbb{F}$-predictable and $[-U_1,U_2]$-valued processes $u=\{u_t\}_{t\ge0}$. 
\end{definition}

The main goal will be to minimize the following objective function:
\begin{equation}
\label{eqn:J}
J(x,z,u) = \mathbb{E}\biggl[ \int_0^{+\infty} e^{-\lambda t}f(X^{u,x}_t, Z_t, u_t)\,dt \mid Z_0=z \biggr],
\qquad (x,z)\in(0,+\infty)\times\mathbb{R}, u\in\mathcal{U} \;,
\end{equation}
where $\lambda>0$ is the government discounting factor and $f\colon(0,+\infty)\times\mathbb{R}\times[-U_1,U_2]\to[0,+\infty)$ is a cost function satisfying the following hypotheses.

\begin{assumption}
\label{ass:f}
We assume that 
\begin{itemize}
\item $f$ is nonnegative;
\item $\exists C>0$ and $m>0$ such that 
\[
f(x,z,u)\le C(1+x^m) \qquad \forall (x,z,u)\in(0,+\infty)\times\mathbb{R}\times[-U_1,U_2]\;,
\]
\item $\lambda>\lambda_m$ (see equation \eqref{eqn:lambda}).
\end{itemize}
\end{assumption}

We observe that our problem formulation is very general and flexible. The cost function depends on the debt-to-GDP ratio, which has to be controlled, and the government can take into account fluctuations of the stochastic factor $Z$. For instance, when $Z$ represents the state of economy, countercyclical policies are allowed. In addition to this, the fiscal policy level can be explicitly controlled as well. Clearly, many operational problems can be addressed in this framework, depending on the configuration that the government assigns to the function $f$. We will investigate some applications in Section \ref{sec:applications}.\\


As announced, the government problem can be formalized in this way:
\begin{equation}
\label{eqn:infinitetime_pb}
v(x,z) = \inf_{u\in\mathcal{U}} J(x,z,u) \;, \qquad (x,z)\in(0,+\infty)\times\mathbb{R} \;.
\end{equation}


\begin{prop}
\label{prop:Jfinite}
Every admissible strategy $u \in \mathcal{U}$ is such that $J(x,z,u)<+\infty$, $\forall (x,z)\in(0,+\infty)\times\mathbb{R}$.
\end{prop}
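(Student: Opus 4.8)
The plan is to bound the running cost by the polynomial growth condition in Assumption \ref{ass:f}, interchange expectation and time integral by nonnegativity (Tonelli), and then invoke the moment estimate \eqref{eqn:EXm_finite} together with the integrability condition $\lambda>\lambda_m$.

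First I would use the nonnegativity of $f$ to apply the Tonelli theorem, rewriting
\[
J(x,z,u) = \int_0^{+\infty} e^{-\lambda t}\,\mathbb{E}\bigl[f(X^{u,x}_t, Z_t, u_t)\mid Z_0=z\bigr]\,dt \;.
\]
Next, by the polynomial growth bound $f(x,z,u)\le C(1+x^m)$ from Assumption \ref{ass:f}, applied pathwise with the admissible control $u_t\in[-U_1,U_2]$, I would estimate
\[
\mathbb{E}\bigl[f(X^{u,x}_t, Z_t, u_t)\bigr] \le C\bigl(1+\mathbb{E}[(X^{u,x}_t)^m]\bigr) \;,
\]
and then plug in the moment estimate \eqref{eqn:EXm_finite}, namely $\mathbb{E}[(X^{u,x}_t)^m]\le x^m e^{\lambda_m t}$, which is valid for \emph{every} $u\in\mathcal{U}$ since the derivation of \eqref{eqn:EXm_finite} in Remark \ref{remark:Xdiseq} uses only Assumption \ref{ass:1} and the compactness of the control set.

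Combining these steps yields
\[
J(x,z,u) \le C\int_0^{+\infty} e^{-\lambda t}\bigl(1 + x^m e^{\lambda_m t}\bigr)\,dt
= C\left(\frac{1}{\lambda} + \frac{x^m}{\lambda-\lambda_m}\right) \;,
\]
where both terms are finite because $\lambda>0$ and, crucially, $\lambda>\lambda_m$ by Assumption \ref{ass:f}. This bound is uniform over $u\in\mathcal{U}$, so in particular $J(x,z,u)<+\infty$ for every admissible strategy, completing the argument. There is no real obstacle here: the only point requiring a word of care is the justification that the moment bound \eqref{eqn:EXm_finite} holds uniformly in the control (which is immediate from its proof), and the verification that the final elementary integral converges, which is exactly what the condition $\lambda>\lambda_m$ guarantees.
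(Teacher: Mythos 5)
Your proposal is correct and follows essentially the same route as the paper: bound $f$ by $C(1+x^m)$, apply the moment estimate \eqref{eqn:EXm_finite}, and integrate using $\lambda>\lambda_m$, arriving at the same bound $\frac{Cx^m}{\lambda-\lambda_m}+\frac{C}{\lambda}$. The only difference is that you make the Tonelli interchange explicit, which the paper leaves implicit.
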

\begin{proof}
Under Assumptions \ref{ass:1} and  \ref{ass:f} and recalling \eqref{eqn:EXm_finite}, we get that $\forall u \in \mathcal{U}$, and  $(x,z)\in(0,+\infty)\times\mathbb{R}$
\begin{align}
J(x,z,u)&=\mathbb{E}\biggl[ \int_0^{+\infty} e^{-\lambda t}f(X^{u,x}_t,Z_t,u_t)\,dt \biggr] \notag\\
&\le C\mathbb{E}\biggl[ \int_0^{+\infty} e^{-\lambda t}(X^{u,x}_t)^m\,dt 
+ \int_0^{+\infty} e^{-\lambda t}\,dt \biggr]  \notag\\
&\le Cx^m  \int_0^{+\infty} e^{(\lambda_m-\lambda)t}\,dt + \frac{C}{\lambda} \notag\\
\label{eqn:J_estimate}
&= \frac{Cx^m}{\lambda-\lambda_m} + \frac{C}{\lambda} < +\infty \;.
\end{align}
\end{proof}


\section{Properties of the value function}\label{properties}

In this section we explore some properties of the value function.

\begin{prop}
\label{prop:v_properties}
The value function given in \eqref{eqn:infinitetime_pb} satisfies the following properties:
\begin{itemize}
\item $v(x,z)\ge0$ $\forall (x,z)\in(0,+\infty)\times\mathbb{R}$;
\item $\exists M>0$ such that $v(x,z)\le M(1+x^m)$ $\forall (x,z)\in(0,+\infty)\times\mathbb{R}$.
\end{itemize}
If in addition $\exists \tilde{C}>0$ such that $f(x,z,0)\le \tilde{C}x^m$ $\forall (x,z)\in(0,+\infty)\times\mathbb{R}$ for some $m>0$, then 
\begin{itemize}
\item $\exists \tilde{M}>0$ such that $v(x,z)\le \tilde{M}x^m$ $\forall (x,z)\in(0,+\infty)\times\mathbb{R}$;
\item $v(0^+,z)=0$ $\forall z\in\mathbb{R}$.
\end{itemize}
\end{prop}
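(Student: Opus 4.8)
The plan is to establish the four claims in the order listed, since each is either immediate or builds on the estimates already collected in Remark~\ref{remark:Xdiseq} and Proposition~\ref{prop:Jfinite}.

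\textbf{Nonnegativity.} Since $f\ge 0$ by Assumption~\ref{ass:f}, the integrand $e^{-\lambda t}f(X^{u,x}_t,Z_t,u_t)$ is nonnegative for every admissible $u$, hence $J(x,z,u)\ge 0$ for all $u\in\mathcal U$, and taking the infimum preserves the bound, giving $v(x,z)\ge 0$.

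\textbf{Polynomial growth $v(x,z)\le M(1+x^m)$.} This is essentially a restatement of the bound obtained inside the proof of Proposition~\ref{prop:Jfinite}: for \emph{any} fixed $u\in\mathcal U$ (e.g. $u\equiv 0$), the chain of inequalities there yields $J(x,z,u)\le \frac{Cx^m}{\lambda-\lambda_m}+\frac{C}{\lambda}$, where $\lambda-\lambda_m>0$ by Assumption~\ref{ass:f}. Since $v(x,z)=\inf_{u}J(x,z,u)\le J(x,z,0)$, we get $v(x,z)\le \frac{Cx^m}{\lambda-\lambda_m}+\frac{C}{\lambda}\le M(1+x^m)$ with $M\doteq\max\{\frac{C}{\lambda-\lambda_m},\frac{C}{\lambda}\}$.

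\textbf{The sharpened bound under $f(x,z,0)\le\tilde C x^m$.} Here I would simply redo the estimate of Proposition~\ref{prop:Jfinite} but bounding $v(x,z)\le J(x,z,0)$ and using the \emph{homogeneous} bound $f(X^{0,x}_t,Z_t,0)\le \tilde C (X^{0,x}_t)^m$ instead of $C(1+x^m)$. Then
\[
v(x,z)\le \tilde C\,\mathbb{E}\Bigl[\int_0^{+\infty} e^{-\lambda t}(X^{0,x}_t)^m\,dt\Bigr]
\le \tilde C x^m\int_0^{+\infty} e^{(\lambda_m-\lambda)t}\,dt=\frac{\tilde C x^m}{\lambda-\lambda_m}\doteq\tilde M x^m,
\]
using \eqref{eqn:EXm_finite} and Tonelli's theorem to exchange expectation and time integral (legitimate since the integrand is nonnegative). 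Note this requires $m$ here to be the same exponent appearing in $\lambda_m$, or at least one for which $\lambda>\lambda_m$; I would point out that if $f(\cdot,\cdot,0)$ grows with a \emph{smaller} power than in Assumption~\ref{ass:f} one still gets a valid (possibly smaller) $\lambda_m$, so the hypothesis $\lambda>\lambda_m$ continues to hold. Finally, the limit $v(0^+,z)=0$ follows immediately: $0\le v(x,z)\le\tilde M x^m\to 0$ as $x\to 0^+$.

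I do not anticipate a genuine obstacle here — every step reuses the explicit solution \eqref{eqn:Xsol} and the moment bound \eqref{eqn:EXm_finite}. The only point requiring a little care is the bookkeeping of the exponent $m$: the polynomial-growth hypothesis in Assumption~\ref{ass:f} and the extra hypothesis $f(x,z,0)\le\tilde C x^m$ must be read with a common $m$ (or the $\lambda_m$ inequality re-examined), and the interchange of $\mathbb{E}$ and $\int_0^\infty\,dt$ should be justified by nonnegativity (Tonelli) rather than dominated convergence. With these remarks the proof is a short computation.
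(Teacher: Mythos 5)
Your proposal is correct and follows essentially the same route as the paper: nonnegativity from $f\ge 0$, the bound $M(1+x^m)$ by reusing the estimate from Proposition \ref{prop:Jfinite}, the sharpened bound via $v(x,z)\le J(x,z,0)\le \tilde{M}x^m$ with $\tilde{M}=\tilde{C}/(\lambda-\lambda_m)$, and the limit $v(0^+,z)=0$ by squeezing. Your added remarks on Tonelli and on keeping the exponent $m$ consistent are sensible refinements of what the paper leaves implicit, but they do not change the argument.
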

\begin{proof}
Using Assumption \ref{ass:f}, we easily obtain that $v$ is nonnegative. 
Now manipulating equation \eqref{eqn:J_estimate} we obtain that $\forall u \in \mathcal{U}$, and  $(x,z)\in(0,+\infty)\times\mathbb{R}$

\[
\begin{split}
J(x,z,u) &\le \frac{C}{\lambda-\lambda_m} \biggl( x^m + \frac{\lambda-\lambda_m}{\lambda} \biggr) \\
&\le \frac{C}{\lambda-\lambda_m} (1+ x^m) \;,
\end{split}
\] 
hence $v(x,z)\le M(1+x^m)$ $\forall (x,z)\in(0,+\infty)\times\mathbb{R}$ with $M=\frac{C}{\lambda-\lambda_m}$. 
Now we assume that $f(x,z,0)\le \tilde{C}x^m$ $\forall (x,z)\in(0,+\infty)\times\mathbb{R}$. Then
\[
v(x,z) \le J(x,z,0) \le \tilde{M}x^m \;,
\]
choosing $\tilde{M}=\frac{\tilde C}{\lambda-\lambda_m}$ (by imitation of the proof of Proposition \ref{prop:Jfinite}). This in turn implies that
\[
0 \le v(x,z) \le \tilde{M}x^m \quad  \forall (x,z)\in(0,+\infty)\times\mathbb{R} \quad \Rightarrow \quad v(0^+,z)=0 \quad \forall z\in\mathbb{R} \;.
\]
\end{proof}

\begin{prop}\label{INCR}
Suppose that $f$ is increasing in $x\in(0,+\infty)$. Then $v$ is increasing in $x>0$, i.e. $0<x\le x' \Rightarrow v(x,z)\le v(x',z)$ $\forall z \in \mathbb{R}$.
\end{prop}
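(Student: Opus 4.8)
The plan is to exploit the explicit representation \eqref{eqn:Xsol} of the controlled debt ratio together with a pathwise comparison argument. The key observation is that the multiplicative structure of the solution separates the dependence on the initial datum from the dependence on the control: for any admissible $u\in\mathcal U$ and any initial points $0<x\le x'$, formula \eqref{eqn:Xsol} gives
\[
X^{u,x'}_t = \frac{x'}{x}\,X^{u,x}_t \ge X^{u,x}_t \qquad \forall t\ge 0,\ \mathbb{P}\text{-a.s.},
\]
because the stochastic exponential factor $e^{\int_0^t (r(Z_s)-g(Z_s,u_s)-u_s)\,ds-\frac12\sigma^2 t}e^{\sigma W_t}$ is strictly positive and is the \emph{same} for both initial conditions (the control $u$ and the driver $Z$ being fixed). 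In particular $X^{u,x}_t$ is strictly positive a.s., so both trajectories remain in the domain $(0,+\infty)$ where $f$ is assumed increasing in its first argument.

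The main step is then immediate: fix $u\in\mathcal U$. Since $f(\cdot,z,u)$ is increasing in $x$ and $X^{u,x}_t\le X^{u,x'}_t$ a.s. for all $t$, we have $f(X^{u,x}_t,Z_t,u_t)\le f(X^{u,x'}_t,Z_t,u_t)$ a.s. for all $t\ge 0$; multiplying by $e^{-\lambda t}\ge 0$, integrating over $t\in(0,+\infty)$ and taking expectations (all quantities being finite by Proposition \ref{prop:Jfinite}) yields $J(x,z,u)\le J(x',z,u)$. This holds for every admissible $u$, so taking the infimum over $u\in\mathcal U$ on both sides preserves the inequality and gives $v(x,z)=\inf_{u}J(x,z,u)\le\inf_{u}J(x',z,u)=v(x',z)$, as claimed.

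There is essentially no hard part here; the only point requiring a word of care is the interchange of expectation and the (improper) time integral, and the measurability/integrability of the integrand, but these are exactly what Assumption \ref{ass:f} and the estimate \eqref{eqn:EXm_finite} underlying Proposition \ref{prop:Jfinite} guarantee, via Tonelli's theorem applied to the nonnegative integrand $e^{-\lambda t}f(X^{u,x}_t,Z_t,u_t)$. One could also phrase the comparison without the explicit solution, by a standard SDE comparison principle, but since \eqref{eqn:Xsol} is already available the pathwise scaling identity is the cleanest route.
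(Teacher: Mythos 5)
Your proof is correct and follows essentially the same route as the paper: both use the explicit solution \eqref{eqn:Xsol} to obtain the pathwise comparison $X^{u,x}_t\le X^{u,x'}_t$, then apply the monotonicity of $f$ and take the infimum over admissible controls. The extra remarks on the scaling identity and Tonelli's theorem are fine but not needed beyond what the paper already records.
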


\begin{proof}
Let us take $0<x\leq x'$. By equation \eqref{eqn:Xsol} we see that $\forall u\in\mathcal{U}$, $X^{u,x}_t \leq X^{u,x'}_t$ $\forall t\ge0$ $\mathbb{P}-a.s.$. Using the monotonicity of $f$ we get that
\[
\mathbb{E}\biggl[ \int_0^{+\infty} e^{-\lambda t}f(X^{u,x}_t,Z_t,u_t)\,dt \biggr] 
	\leq \mathbb{E}\biggl[ \int_0^{+\infty} e^{-\lambda t}f(X^{u,x'}_t,Z_t,u_t)\,dt \biggr] 
	\qquad \forall u\in\mathcal{U}\;.
\]
Taking the infimum over $\mathcal{U}$ of both sides, we obtain our statement.
\end{proof}

The following proposition is also useful when infinite horizon problems are studied.

\begin{prop}
This result hold true, $\forall u \in \mathcal{U}$, and  $(x,z)\in(0,+\infty)\times\mathbb{R}$

\[
\lim_{T\to+\infty}e^{-\lambda T}\mathbb{E}[v(X^{u,x}_T,z)] = 0 
 \;.
\]
\end{prop}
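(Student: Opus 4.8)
The plan is to bound $v(X^{u,x}_T,z)$ using the polynomial growth estimate from Proposition~\ref{prop:v_properties}, namely $v(y,z)\le M(1+y^m)$ for all $(y,z)\in(0,+\infty)\times\mathbb{R}$, and then exploit the moment estimate \eqref{eqn:EXm_finite} together with the assumption $\lambda>\lambda_m$ from Assumption~\ref{ass:f}. First I would write
\[
0 \le e^{-\lambda T}\mathbb{E}[v(X^{u,x}_T,z)] \le M e^{-\lambda T}\bigl(1 + \mathbb{E}[(X^{u,x}_T)^m]\bigr) \le M e^{-\lambda T}\bigl(1 + x^m e^{\lambda_m T}\bigr) = M e^{-\lambda T} + M x^m e^{(\lambda_m-\lambda)T}.
\]
Since $\lambda>0$ and $\lambda-\lambda_m>0$, both terms on the right tend to $0$ as $T\to+\infty$, and the left-hand side is nonnegative by Proposition~\ref{prop:v_properties}, so the squeeze theorem gives the claim.

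A subtlety to address is that Proposition~\ref{prop:v_properties} establishes $v(y,z)\le M(1+y^m)$ only for the exponent $m$ appearing in Assumption~\ref{ass:f} (the one controlling the growth of $f$), and this is consistent with the $m$ used in the moment bound \eqref{eqn:EXm_finite} and in the definition \eqref{eqn:lambda} of $\lambda_m$; I would make sure to use the same $m$ throughout so that the inequality $\lambda>\lambda_m$ is exactly the one guaranteed by the assumptions. One should also note that $v(X^{u,x}_T,z)$ is well-defined and measurable: $X^{u,x}_T>0$ almost surely by the explicit representation \eqref{eqn:Xsol}, so evaluating $v$ at $X^{u,x}_T$ makes sense, and the upper bound by $M(1+(X^{u,x}_T)^m)$ together with the integrability in \eqref{eqn:EXm_finite} ensures the expectation is finite for every fixed $T$.

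I do not anticipate a genuine obstacle here; the result is a direct consequence of already-established facts. The only point requiring a line of care is the interchange-free nature of the argument — everything is a pointwise bound followed by taking expectations, so no dominated convergence or Fubini is needed, just monotonicity of the expectation and the elementary limit of exponentials. If one wanted the statement uniformly in $u\in\mathcal{U}$, one would observe that the final bound $M e^{-\lambda T} + M x^m e^{(\lambda_m-\lambda)T}$ does not depend on $u$ at all, so the convergence is in fact uniform over all admissible strategies; I would mention this since it is exactly the form in which the estimate is typically invoked in the verification arguments of Section~\ref{characterization}.
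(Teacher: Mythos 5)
Your argument is correct and coincides with the paper's own proof: both bound $v$ by $M(1+x^m)$ via Proposition \ref{prop:v_properties}, apply the moment estimate \eqref{eqn:EXm_finite}, and conclude from $\lambda>\lambda_m$. The additional remarks on measurability and uniformity in $u$ are fine but not needed.
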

\begin{proof}
Using Proposition \ref{prop:v_properties} and equation \eqref{eqn:EXm_finite} we find that
\[
\begin{split}
e^{-\lambda T}\mathbb{E}[v(X^{u,x}_T,z)] &\le e^{-\lambda T}M (1+ \mathbb{E}[(X^{u,x}_T)^m ]) \\
&\le e^{-\lambda T}M (1+x^m e^{\lambda_m T}) \;.
\end{split}
\]
Taking $T\to+\infty$, this quantity converges to $0$ since $\lambda>\lambda_m$.
\end{proof}

\begin{prop}
\label{prop:vconvex}
Suppose that $f$ is convex in $x>0$. Moreover, assume that an optimal control $u^* \in\mathcal{U}$ exists for the problem \eqref{eqn:infinitetime_pb} for any $(x,z)\in(0,+\infty)\times\mathbb{R}$. Then the value function in \eqref{eqn:infinitetime_pb} is convex in $x \in (0, + \infty)$.
\end{prop}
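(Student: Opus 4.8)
The plan is to exploit the linear dependence of the controlled state on its initial value. For a \emph{fixed} $u\in\mathcal{U}$, formula \eqref{eqn:Xsol} reads $X^{u,x}_t=x\,e^{\int_0^t(r(Z_s)-g(Z_s,u_s)-u_s)\,ds-\frac{1}{2}\sigma^2t+\sigma W_t}$, where the exponential factor does not depend on $x$; hence, for $\theta\in(0,1)$, $x_1,x_2>0$ and $x_\theta:=\theta x_1+(1-\theta)x_2$, one has $X^{u,x_\theta}_t=\theta X^{u,x_1}_t+(1-\theta)X^{u,x_2}_t$ for all $t\ge0$, $\mathbb{P}$-a.s. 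Together with the convexity of $x\mapsto f(x,z,u)$, this already shows that $J(\cdot,z,u)$ is convex in $x$ for each fixed $u$; however $v$ is the pointwise infimum over $u\in\mathcal{U}$ of these functions, which in general need not be convex, so the standing assumption that an optimal control exists must be used to recombine the minimizers attached to $x_1$ and $x_2$.

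Accordingly, I would fix $x_1,x_2>0$ and $\theta\in(0,1)$, choose optimal controls $u^1,u^2\in\mathcal{U}$ for $(x_1,z)$ and $(x_2,z)$, and set $\bar X_t:=\theta X^{u^1,x_1}_t+(1-\theta)X^{u^2,x_2}_t>0$ and $p_t:=\theta X^{u^1,x_1}_t/\bar X_t\in(0,1)$. Writing $\beta(z,u):=r(z)-g(z,u)-u$, It\^o's formula shows that $\bar X$ solves the linear SDE $d\bar X_t=\bar X_t[\tilde\beta_t\,dt+\sigma\,dW_t]$, $\bar X_0=x_\theta$, with $\tilde\beta_t:=p_t\,\beta(Z_t,u^1_t)+(1-p_t)\,\beta(Z_t,u^2_t)$. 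Being a convex combination of $\beta(Z_t,u^1_t)$ and $\beta(Z_t,u^2_t)$, the value $\tilde\beta_t$ lies in the range of the map $u\mapsto\beta(Z_t,u)$ over the compact interval $[-U_1,U_2]$ (which is an interval, provided $g(z,\cdot)$ is continuous); by the intermediate value theorem and a measurable selection there exists an $\mathbb{F}$-predictable, $[-U_1,U_2]$-valued process $\bar u$ with $\beta(Z_t,\bar u_t)=\tilde\beta_t$. Since \eqref{eqn:Xsol} is the unique solution of the corresponding linear SDE, it follows that $X^{\bar u,x_\theta}_t=\bar X_t=\theta X^{u^1,x_1}_t+(1-\theta)X^{u^2,x_2}_t$ for all $t\ge0$, $\mathbb{P}$-a.s.

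The convexity of $f$ in its first argument then gives, for every $t$,
\[
f\bigl(X^{\bar u,x_\theta}_t,Z_t,\bar u_t\bigr)\le\theta\,f\bigl(X^{u^1,x_1}_t,Z_t,\bar u_t\bigr)+(1-\theta)\,f\bigl(X^{u^2,x_2}_t,Z_t,\bar u_t\bigr),
\]
so that, after multiplying by $e^{-\lambda t}$, integrating over $[0,+\infty)$ and taking expectations,
\[
v(x_\theta,z)\le J(x_\theta,z,\bar u)\le\mathbb{E}\left[\int_0^{+\infty}e^{-\lambda t}\bigl(\theta\,f(X^{u^1,x_1}_t,Z_t,\bar u_t)+(1-\theta)\,f(X^{u^2,x_2}_t,Z_t,\bar u_t)\bigr)\,dt\right].
\]
When $f$ does not depend on the control, the third argument is irrelevant, the right-hand side equals $\theta\,J(x_1,z,u^1)+(1-\theta)\,J(x_2,z,u^2)=\theta\,v(x_1,z)+(1-\theta)\,v(x_2,z)$, and the convexity of $v(\cdot,z)$ follows.

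I expect the main obstacle to be the genuinely control-dependent case: the blended control $\bar u$ that reproduces $\bar X$ coincides neither with $u^1$, nor with $u^2$, nor with any convex combination of the two having a common weight (the state weight $\theta$ and the control weight $p_t$ differ in general), so the ``mixed'' terms $f(X^{u^i,x_i}_t,Z_t,\bar u_t)$ cannot be recognized as integrands of $J(x_i,z,u^i)$, and convexity of $f$ in $x$ alone no longer suffices; additional structural hypotheses on $f$ (and, through $\beta$, on $g$) would then be required. A secondary and milder issue is the measurable selection producing $\bar u$, for which continuity of $g(z,\cdot)$ is a convenient sufficient condition.
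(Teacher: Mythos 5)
Your argument establishes the proposition only when $f$ does not depend on $u$, and you acknowledge this yourself in the closing paragraph; since the statement allows a control-dependent running cost, the proof as written is incomplete. The gap is real: after blending $u^1$ and $u^2$ into $\bar u$, the integrand $f(X^{u^i,x_i}_t,Z_t,\bar u_t)$ carries the wrong control in its third slot and cannot be dominated by $f(X^{u^i,x_i}_t,Z_t,u^i_t)$ using convexity in $x$ alone. In addition, the measurable-selection step producing $\bar u$ needs continuity of $g(z,\cdot)$, a hypothesis the proposition does not impose.

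The paper's proof avoids all of this by never mixing controls. It applies one and the same $u\in\mathcal{U}$ to all three initial conditions: since $X^{u,x}_t$ is linear in $x$ for fixed $u$ (your own first observation), convexity of $f$ in its first argument gives $J(x_k,z,u)\le kJ(x,z,u)+(1-k)J(x',z,u)$ for every $u$, with the same $u_t$ appearing in the third argument of all three terms, so the $u$-dependence of $f$ is harmless. Taking the infimum on the left and then evaluating the right-hand side at the optimal control $u^*$ yields $v(x_k,z)\le kv(x,z)+(1-k)v(x',z)$. Your objection that a pointwise infimum of convex functions need not be convex is precisely what the existence-of-an-optimizer hypothesis is there to neutralize: it allows one to replace $kJ(x,z,u)+(1-k)J(x',z,u)$ by $kv(x,z)+(1-k)v(x',z)$ upon substituting $u=u^*$. (For that final equality the paper does need $u^*$ to be optimal simultaneously at $x$ and $x'$ --- which is how the hypothesis is meant to be read, and which holds for instance for the feedback controls $u^*(Z_t)$ obtained in Section \ref{optimal_stra}, whose induced control process does not depend on the initial debt level --- so your instinct that something must be said at this point is sound; but the intended resolution is substitution of a common optimizer, not the recombination of two distinct ones that your construction attempts.)
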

\begin{proof}
Let us take $z\in\mathbb{R}$, $x,x'>0$ and $k\in[0,1]$. Defining $x_k=kx+(1-k)x'$, by equation \eqref{eqn:Xsol} we easily get that, $\forall u\in\mathcal{U}$
\[
\begin{split}
X^{u,x_k}_t &= (kx+(1-k)x')\,
e^{\int_0^t (r(Z_s)-g(Z_s,u_s)-u_s)\,ds - \frac{1}{2}\sigma^2t}e^{\sigma W_t} \\
&= kX^{u,x}_t+(1-k)X^{u,x'}_t \qquad \forall t \geq 0, \mathbb{P}-a.s. \;.
\end{split}
\]
Now the convexity of $f$ w.r.t. the first variable $x \in \mathbb R$, implies that,  $\forall u\in\mathcal{U}$
\[
\begin{split}
f(X^{u,x_k}_t,Z_t,u_t) &= f(kX^{u,x}_t+(1-k)X^{u,x'}_t,Z_t,u_t) \\
&\le kf(X^{u,x}_t,Z_t,u_t)+(1-k)f(X^{u,x'}_t,Z_t,u_t) 
\qquad \forall t \geq 0, \mathbb{P}-a.s. \;.
\end{split}
\]
Hence
\[
\begin{split}
J(x_k,z,u) &= \mathbb{E}\biggl[ \int_0^{+\infty} e^{-\lambda t}f(X^{u,x_k}_t,Z_t,u_t)\,dt \biggr] \\
&\le kJ(x,z,u)+(1-k)J(x',z,u) \qquad \forall u\in\mathcal{U} \;.
\end{split}
\]
Taking the infimum over $u \in \mathcal{U}$ in the left side yields
\[
v(x_k,z) \le kJ(x,z,u)+(1-k)J(x',z,u) \qquad \forall u\in\mathcal{U} \;. 
\]
Taking $u=u^*$ we finally get that
\[
v(x_k,z) \le kJ(x,z,u^*)+(1-k)J(x',z,u^*) = kv(x,z) +(1-k)v(x',z) \;.
\]
\end{proof}

\begin{prop}\label{prop:vcont}
Let us assume the following hypotheses:
\begin{itemize}
\item $b_Z$ and $\sigma_Z$ are Lipschitz continuous functions on $z\in\mathbb{R}$;
\item $f$ is continuous in $(x,z)\in(0,+\infty)\times\mathbb{R}$, uniformly in $u\in[-U_1,U_2]$;
\item $r$ is continuous in $z\in\mathbb{R}$;
\item $g$ is continuous in $z\in\mathbb{R}$, uniformly in $u\in[-U_1,U_2]$.
\end{itemize}
Then the value function is continuous in $(x,z)\in(0,+\infty)\times\mathbb{R}$.
\end{prop}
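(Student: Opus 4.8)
The plan is to deduce continuity of $v$ from a form of local equicontinuity of the family $\{J(\cdot,\cdot,u)\}_{u\in\mathcal U}$, using the elementary bound $|v(x,z)-v(x_0,z_0)|\le\sup_{u\in\mathcal U}|J(x,z,u)-J(x_0,z_0,u)|$, valid for infima. Fix $(x_0,z_0)\in(0,+\infty)\times\mathbb R$ and let $(x,z)$ vary in a bounded neighbourhood of it. Given $\varepsilon>0$, I would first split the integral in \eqref{eqn:J} at a finite time $T$. The tail $\mathbb E\big[\int_T^{+\infty}e^{-\lambda t}f(X^{u,x}_t,Z_t,u_t)\,dt\big]$ is bounded, uniformly in $u$ and in $(x,z)$ in the neighbourhood, using $f(x,z,u)\le C(1+x^m)$ from Assumption \ref{ass:f}, the moment estimate \eqref{eqn:EXm_finite}, and $\lambda>\lambda_m$; hence a suitably large $T$ makes this term, and its counterpart at $(x_0,z_0)$, smaller than $\varepsilon$.

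It then remains to show that the finite-horizon piece $\mathbb E\big[\int_0^{T}e^{-\lambda t}\big|f(X^{u,x}_t,Z^{z}_t,u_t)-f(X^{u,x_0}_t,Z^{z_0}_t,u_t)\big|\,dt\big]$ tends to $0$ as $(x,z)\to(x_0,z_0)$, uniformly in $u\in\mathcal U$. The first step is the classical $L^2$ stability estimate for the SDE \eqref{eqn:Z}: since $b_Z$ and $\sigma_Z$ are Lipschitz, Burkholder--Davis--Gundy together with Gronwall's lemma yield $\mathbb E[\sup_{t\le T}|Z^{z}_t-Z^{z_0}_t|^2]\le C_T|z-z_0|^2$, with a constant not depending on $u$. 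Next I would exploit the explicit representation \eqref{eqn:Xsol}: since $X^{u,x}_t=x\exp\big(\int_0^t(r(Z^{z}_s)-g(Z^{z}_s,u_s)-u_s)\,ds-\tfrac12\sigma^2t+\sigma W_t\big)$, the dependence on $x$ is linear and the dependence on $z$ enters only through $\int_0^t\big(r(Z^{z}_s)-g(Z^{z}_s,u_s)\big)\,ds$, which is moreover bounded in $t\le T$ uniformly in $u$ and $z$ by Assumption \ref{ass:1}. Localising $Z^{z_0}$, and hence $Z^{z}$ for $|z-z_0|$ small, on a compact set of probability close to $1$, and using the continuity of $r$ and the continuity of $g$ in $z$ uniformly in $u$ (which, by a covering argument, gives a modulus of continuity on compacts independent of $u$), I would conclude that this time integral converges to its value at $z_0$ uniformly in $t\le T$ and in $u$; combined with $x\to x_0$ this gives $X^{u,x}_t\to X^{u,x_0}_t$ in probability, uniformly on $[0,T]$ and with a tail estimate uniform in $u\in\mathcal U$.

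To pass this convergence through the $dt\,d\mathbb P$ integral, I would invoke uniform integrability: from \eqref{eqn:Xsol} and the finiteness of all moments of $\sup_{t\le T}e^{\sigma W_t}$, the quantities $\mathbb E[\sup_{t\le T}(X^{u,x}_t)^{m+1}]$ stay bounded uniformly in $u$ and in $(x,z)$ near $(x_0,z_0)$, which together with the polynomial growth bound on $f$ makes the integrands $e^{-\lambda t}f(X^{u,x}_t,Z^{z}_t,u_t)$ uniformly integrable on $[0,T]\times\Omega$. Since $f$ is continuous in $(x,z)$ uniformly in $u$, the convergence in probability of $(X^{u,x}_t,Z^{z}_t)$ to $(X^{u,x_0}_t,Z^{z_0}_t)$ then upgrades, by the Vitali convergence theorem, to convergence in $L^1(dt\,d\mathbb P)$ of the integrands, uniformly in $u$; so the finite-horizon piece vanishes in the limit. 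Putting the two estimates together gives $\limsup_{(x,z)\to(x_0,z_0)}\sup_{u\in\mathcal U}|J(x,z,u)-J(x_0,z_0,u)|\le 2\varepsilon$ for every $\varepsilon>0$, hence $v$ is continuous at $(x_0,z_0)$, and therefore on all of $(0,+\infty)\times\mathbb R$.

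I expect the main obstacle to be precisely the stability of the controlled process $X^{u,x}$ with respect to the factor's initial condition $z$: since $z$ affects $X^{u,x}$ nonlinearly through $\int_0^t(r(Z_s)-g(Z_s,u_s))\,ds$ and $Z$ ranges over all of $\mathbb R$, one must combine a localisation argument with the precise ``uniformly in $u$'' form of the continuity hypotheses on $g$ (and on $f$) in order to keep every estimate independent of the control — which is what ultimately yields the equicontinuity of $\{J(\cdot,\cdot,u)\}_{u\in\mathcal U}$ rather than mere pointwise continuity of each $J(\cdot,\cdot,u)$.
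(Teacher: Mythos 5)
Your proposal is correct and follows essentially the same route as the paper: both rest on the explicit representation \eqref{eqn:Xsol}, continuous dependence of $Z$ on its initial condition, the moment bound \eqref{eqn:EXm_finite} with $\lambda>\lambda_m$ to control the time integral, and uniform-in-$u$ convergence of $J(x,z,u)$ to pass continuity through the infimum. The paper's own argument is much terser (it simply invokes continuous dependence of $Z^{z}$ on $z$ and the dominated convergence theorem), so your tail-splitting at a finite horizon, the localisation giving a $u$-independent modulus of continuity, and the Vitali step are details the published proof leaves implicit rather than a different method.
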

\begin{proof}
Let us denote by $(X^{u,x,z},Z^z)=\{ (X^{u,x,z}_t, Z^z_t)\}_{t \geq 0}$ the solution to the system of equations \eqref{eqn:Z} and \eqref{eqn:debt} with initial data $(x,z) \in (0,+\infty)\times\mathbb{R}$. By classical results on SDE, the process $Z^z$ depends continuously on the initial data $z \in \mathbb{R}$, moreover by  \eqref{eqn:Xsol} we have that for any $u \in \mathcal{U}$
$$X^{u,x,z}_t = xe^{\int_0^t (r(Z^z_s)-g(Z^z_s,u_s)-u_s)\,ds - \frac{1}{2}\sigma^2t}e^{\sigma W_t} \qquad \forall t \geq 0, \mathbb{P}-a.s. \;.
$$
Let $\{(x_n, z_n)\}_{n\geq 0}$ be any sequence in $(0,+\infty)\times\mathbb{R}$ converging to  $(x,z) \in (0,+\infty)\times\mathbb{R}$ as $n \to + \infty$, then  $Z^{z_n}_t \to Z^{z}_t$ $\forall t \geq 0$, as  $n \to + \infty$ and by the dominated convergence theorem we have that $\forall t \geq 0$
$$X^{u,x_n,z_n}_t \to X^{u,x,z}_t \qquad \text{uniformly on} \; u\in\mathcal{U} \;, \text{as} \;  n \to + \infty$$
and 
$$J(x_n,z_n,u) \to J(x,z,u) \qquad \text{uniformly on} \; u\in\mathcal{U} \;, \text{as} \; n \to + \infty \;,$$
which finally implies continuity of $v(x,z) = \inf_{u\in\mathcal{U}} J(x,z,u)$ in $(x,z)\in(0,+\infty)\times\mathbb{R}$.
\end{proof}

\section{Characterization of the value function}\label{characterization}

In this section we aim to characterize the value function $v$ given in equation \eqref{eqn:infinitetime_pb}. Precisely, we prove that it is a viscosity solution of the Hamilton-Jacobi-Bellman equation associated to our problem (see equation \eqref{eqn:HJB_general} below). To obtain this result only the continuity of $v$ is required. If, in addition, the HJB equation admits a classical solution, then it turns out to be the value function. In this case $v$ will satisfy some additional regularity conditions.\\

We begin finding the HJB equation associated to our problem.
\begin{remark}
For any $(x,z)\in(0,+\infty)\times\mathbb{R}$ and any $u\in [-U_1,U_2]$ the Markov generator of $(X^{u}, Z)$ is given by the following differential operator%
\footnote{This is  a simple application of It\^o's formula.}:
\begin{equation}
\label{eqn:generator}
\mathcal{L}^u \phi(x,z) = x[r(z)-g(z,u)-u]\frac{\partial{\phi}}{\partial{x}}(x,z) 
+ \frac{1}{2}\sigma^2x^2\frac{\partial^2{\phi}}{\partial{x^2}}(x,z) 
+\rho\sigma x\sigma_Z(z)\frac{\partial^2{\phi}}{\partial{x}\partial{z}}(x,z) +\mathcal{L}^Z\phi(x,z)\;,
\end{equation}
where $\phi\colon(0,+\infty)\times\mathbb{R}\to\mathbb{R}$ is a function on $\mathcal{C}^{2,2}((0,+\infty)\times\mathbb{R})$ and $\mathcal{L}^Z$ denotes the operator
\[
\mathcal{L}^Z\phi (x,z) = b_Z(z)\frac{\partial{\phi}}{\partial{z}}(x,z) 
+ \frac{1}{2}\sigma_Z(z)^2\frac{\partial^2{\phi}}{\partial{z^2}}(x,z)  \;.
\]
\end{remark}

The value function in equation \eqref{eqn:infinitetime_pb}, if sufficiently regular, is expected to solve the HJB equation, which is given by
\begin{equation}
\label{eqn:HJB_general}
\inf_{u\in[-U_1,U_2]}\{ \mathcal{L}^u v(x,z) + f(x,z,u) - \lambda v(x,z) \} =0 \;.
\end{equation}

Before stating the main result, we briefly recall the definition of viscosity solution to equation \eqref{eqn:HJB_general}. Let us notice that, in general, one would require that a function $w$ is locally bounded in order to be the solution of a PDE in viscosity sense (see for instance \cite[Chapter 4]{pham:stochastic_control}). However, we already know that $v$ given in equation \eqref{eqn:infinitetime_pb} is continuous, hence we can directly refer to the special case of continuous functions.

\begin{definition}
Let $w\colon(0,+\infty)\times\mathbb{R}\to[0,+\infty)$ be continuous. We say that 
\begin{itemize}
\item $w$ is a viscosity subsolution of equation \eqref{eqn:HJB_general} if
\begin{equation}
\label{eqn:vsubsol}
\inf_{u\in[-U_1,U_2]}\{ \mathcal{L}^u \varphi(\bar{x},\bar{z}) 
+ f(\bar{x},\bar{z},u) - \lambda \varphi(\bar{x},\bar{z}) \} \ge 0 \;,
\end{equation}
for all $(\bar{x},\bar{z})\in(0,+\infty)\times\mathbb{R}$ and for all $\varphi\in\mathcal{C}^{2,2}((0,+\infty)\times\mathbb{R})$ such that $(\bar{x},\bar{z})$ is a maximum point of $w-\varphi$;
\item $w$ is a viscosity supersolution of equation \eqref{eqn:HJB_general} if
\begin{equation}
\label{eqn:vsupersol}
\inf_{u\in[-U_1,U_2]}\{ \mathcal{L}^u \varphi(\bar{x},\bar{z}) 
+ f(\bar{x},\bar{z},u) - \lambda \varphi(\bar{x},\bar{z}) \} \le 0 \;,
\end{equation}
for all $(\bar{x},\bar{z})\in(0,+\infty)\times\mathbb{R}$ and for all $\varphi\in\mathcal{C}^{2,2}((0,+\infty)\times\mathbb{R})$ such that $(\bar{x},\bar{z})$ is a minimum point of $w-\varphi$;
\item $w$ is a viscosity solution of equation \eqref{eqn:HJB_general} if it is a viscosity subsolution and a supersolution.
\end{itemize}
\end{definition}

\begin{theorem}\label{thm:valuefun_viscosity}
Under the hypotheses of Proposition \ref{prop:vcont}, the value function $v$ given in equation \eqref{eqn:infinitetime_pb} is a viscosity solution of the HJB equation \eqref{eqn:HJB_general}.
\end{theorem}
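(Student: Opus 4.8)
The plan is to derive the viscosity subsolution and supersolution inequalities from the Dynamic Programming Principle (DPP), which in the present infinite-horizon setting asserts that for every $(x,z)\in(0,+\infty)\times\mathbb{R}$ and every deterministic $h>0$
\[
v(x,z) = \inf_{u\in\mathcal{U}} \mathbb{E}\!\left[\int_0^h e^{-\lambda t} f(X^{u,x}_t, Z_t, u_t)\,dt + e^{-\lambda h} v(X^{u,x}_h, Z_h)\right].
\]
I would first record this identity (either by a direct argument or by invoking \cite[Chapter 3]{pham:stochastic_control}); the continuity of $v$ from Proposition \ref{prop:vcont}, the moment bound \eqref{eqn:EXm_finite} and the polynomial growth $v(x,z)\le M(1+x^m)$ from Proposition \ref{prop:v_properties} ensure that every expectation above is finite and that the measurable-selection step underlying the DPP goes through.

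\textbf{Supersolution.} Fix $(\bar x,\bar z)$ and $\varphi\in\mathcal{C}^{2,2}((0,+\infty)\times\mathbb{R})$ with $(\bar x,\bar z)$ a minimum point of $v-\varphi$; after the usual normalisation and localisation we may assume $v(\bar x,\bar z)=\varphi(\bar x,\bar z)$ and $v\ge\varphi$ globally. For each $h>0$ choose $u^h\in\mathcal{U}$ that is $h^2$-optimal in the DPP. Bounding $v\ge\varphi$ at time $h$ and then applying Itô's formula to $t\mapsto e^{-\lambda t}\varphi(X^{u^h,\bar x}_t,Z_t)$ — whose stochastic part has zero expectation thanks to $\varphi\in\mathcal{C}^{2,2}$ and \eqref{eqn:EXm_finite} — together with $v(\bar x,\bar z)=\varphi(\bar x,\bar z)$, one arrives at
\[
h^2 \;\ge\; \mathbb{E}\!\left[\int_0^h e^{-\lambda t}\,\Phi\big(X^{u^h,\bar x}_t,Z_t\big)\,dt\right], \qquad \Phi(x,z):=\inf_{u\in[-U_1,U_2]}\{\mathcal{L}^u\varphi(x,z)+f(x,z,u)-\lambda\varphi(x,z)\}.
\]
From the explicit representation \eqref{eqn:Xsol} and Assumption \ref{ass:1} the exponent driving $X^{u^h,\bar x}$ is bounded uniformly in the control, so $\sup_{t\le h}\big(|X^{u^h,\bar x}_t-\bar x|+|Z_t-\bar z|\big)\to0$ in probability as $h\downarrow0$; since $\Phi$ is continuous (by the continuity hypotheses on $f$, $g$, $r$ and the smoothness of $\varphi$), dividing the last display by $h$ and letting $h\downarrow0$ yields $\Phi(\bar x,\bar z)\le0$, i.e. \eqref{eqn:vsupersol}.

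\textbf{Subsolution.} Fix $\varphi\in\mathcal{C}^{2,2}((0,+\infty)\times\mathbb{R})$ with $(\bar x,\bar z)$ a maximum point of $v-\varphi$, normalised so that $v(\bar x,\bar z)=\varphi(\bar x,\bar z)$ and $v\le\varphi$. Here it suffices to test the DPP against an arbitrary \emph{constant} control $u\equiv u_0\in[-U_1,U_2]$: bounding $v\le\varphi$ at time $h$, applying Itô's formula exactly as above and using $v(\bar x,\bar z)=\varphi(\bar x,\bar z)$ gives
\[
0 \;\le\; \frac1h\,\mathbb{E}\!\left[\int_0^h e^{-\lambda t}\big(\mathcal{L}^{u_0}\varphi+f(\cdot,\cdot,u_0)-\lambda\varphi\big)\big(X^{u_0,\bar x}_t,Z_t\big)\,dt\right].
\]
Letting $h\downarrow0$, with the same small-time continuity of the integrand, produces $\mathcal{L}^{u_0}\varphi(\bar x,\bar z)+f(\bar x,\bar z,u_0)-\lambda\varphi(\bar x,\bar z)\ge0$; taking the infimum over $u_0\in[-U_1,U_2]$ gives \eqref{eqn:vsubsol}. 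Combining the two halves shows that $v$ is a viscosity solution of \eqref{eqn:HJB_general}.

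The step I expect to be the main obstacle is the rigorous justification of the DPP — specifically the measurable-selection/concatenation argument that pastes a near-optimal control after time $h$ — rather than the Itô computations and the passage to the limit, which are routine here: the explicit solution \eqref{eqn:Xsol} makes the small-time control of $(X^{u},Z)$ both elementary and uniform over admissible controls, and \eqref{eqn:EXm_finite} disposes of all integrability issues.
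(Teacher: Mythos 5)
Your argument is essentially the paper's own: the paper also proves both inequalities from the dynamic programming principle (citing Pham, Theorem 3.3.1), tests against an arbitrary constant control for the subsolution half and an $h_n^2$-optimal control for the supersolution half, applies It\^o's formula to $e^{-\lambda t}\varphi(X_t,Z_t)$, divides by the time horizon and passes to the limit. The only structural differences are cosmetic: the paper works along a sequence $(x_n,z_n)\to(\bar x,\bar z)$ (a leftover of the semicontinuous-envelope version of the argument, redundant here since $v$ is continuous) and, more substantively, it replaces your deterministic horizon $h$ by the stopping times $\tau_n=\inf\{s:\max(|X_s-x_n|,|Z_s-z_n|)>\epsilon\}\wedge h_n$.

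That last difference hides the one step in your write-up that is not justified as stated. You claim the stochastic part of It\^o's formula ``has zero expectation thanks to $\varphi\in\mathcal{C}^{2,2}$ and \eqref{eqn:EXm_finite}'', and later you invoke continuity of $\Phi$ plus small-time convergence in probability to pass to the limit in $\frac1h\mathbb{E}\bigl[\int_0^h e^{-\lambda t}\Phi(X_t,Z_t)\,dt\bigr]$. But $\varphi$ is an arbitrary $\mathcal{C}^{2,2}$ test function with no growth control, so $\partial_x\varphi(X_t,Z_t)\,\sigma X_t$ need not be square-integrable and $\mathcal{L}^u\varphi(X_t,Z_t)$ need not be dominated by anything integrable on the (small-probability) event that $(X_t,Z_t)$ wanders far from $(\bar x,\bar z)$; the moment bound \eqref{eqn:EXm_finite} controls $X$ only, not $\varphi$ composed with it. This is exactly why the paper localizes with the exit times $\tau_n$: on $[0,\tau_n]$ the pair $(X,Z)$ stays in a fixed compact set, the integrands are bounded, the local martingale is a true martingale, and the limit $\frac{1}{h_n}\int_0^{\tau_n}\to$ (integrand at $(\bar x,\bar z)$) follows from $\tau_n/h_n\to1$. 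Your proof is repaired either by inserting the same stopping times or by noting that the viscosity property is local, so $\varphi$ may be modified outside a compact neighbourhood of $(\bar x,\bar z)$ to have bounded derivatives; as written, though, the integrability step is a gap.
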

\begin{proof}
We adapt the proof of \cite[Chapter 4.3]{pham:stochastic_control} to our framework.
Proposition \ref{prop:vcont} ensures the continuity of $v$. 
Let $(\bar{x},\bar{z})\in(0,+\infty)\times\mathbb{R}$ and take a test function $\varphi\in\mathcal{C}^{2,2}((0,+\infty)\times\mathbb{R})$ such that
\[
0 = (v-\varphi)(\bar{x},\bar{z}) = \max_{(x,z)\in(0,+\infty)\times\mathbb{R}}(v-\varphi)(x,z) \;.
\]
Let us observe that $v\le\varphi$ by construction. Since $v$ is continuous, there exists a sequence $\{(x_n,z_n)\}_{n\ge1}$ such that
\[
(x_n,z_n) \to (\bar{x},\bar{z}) \qquad \text{and } \qquad v(x_n,z_n) \to v(\bar{x},\bar{z}) 
\qquad \text{as } n\to+\infty \;.
\]
Correspondingly, we must have that
\[
\gamma_n \doteq v(x_n,z_n) - \varphi(x_n,z_n) \to 0
\qquad \text{as } n\to+\infty \;.
\]
Now let us consider a control $\bar{u}_t=\bar{u}$ $\forall t>0$, for some arbitrary constant $\bar{u}\in[-U_1,U_2]$. Moreover, let introduce a sequence of stopping times $\{\tau_n\}_{n\ge0}$ as follows:
\[
\tau_n = \inf\{ s\ge0 \mid \max\{\abs{X^{\bar{u},x_n}_s-x_n},\abs{Z^{z_n}_s-z_n}\} > \epsilon \} \land h_n 
\qquad n\ge1\;,
\]
for some fixed $\epsilon>0$ and $\{h_n\}_{n\ge1}$ such that
\[
h_n\to0 \;, \qquad \frac{\gamma_n}{h_n} \to 0 \qquad \text{as } n\to+\infty \;.
\]
Here $\{Z^{z_n}_t\}_{t\ge0}$ denotes the solution of the SDE \eqref{eqn:Z} with initial condition $Z^{z_n}_0=z_n$. By the dynamic programming principle (see e.g. \cite[Theorem 3.3.1]{pham:stochastic_control}) for any $n\ge1$ we have that
\[
v(x_n,z_n) \le \mathbb{E}\biggl[ \int_0^{\tau_n} e^{-\lambda t}
f(X^{\bar{u},x_n}_t,Z^{z_n}_t,\bar{u})\,dt + e^{-\lambda \tau_n}v(X^{\bar{u},x_n}_{\tau_n},Z^{z_n}_{\tau_n}) \biggr] \;,
\]
hence
\[
\varphi(x_n,z_n) + \gamma_n \le \mathbb{E}\biggl[ \int_0^{\tau_n} e^{-\lambda t}
f(X^{\bar{u},x_n}_t,Z^{z_n}_t,\bar{u})\,dt 
+ e^{-\lambda \tau_n}\varphi(X^{\bar{u},x_n}_{\tau_n},Z^{z_n}_{\tau_n}) \biggr] \;.
\]
Applying It\^o's formula we get that
\begin{equation}
\label{eqn:phi_viscosity}
e^{-\lambda \tau_n}\varphi(X^{\bar{u},x_n}_{\tau_n},Z^{z_n}_{\tau_n}) = \varphi(x_n,z_n) + 
\int_0^{\tau_n} e^{-\lambda t}[\mathcal{L}^{\bar{u}} \varphi(X^{\bar{u},x_n}_t,Z^{z_n}_t) 
- \lambda\varphi(X^{\bar{u},x_n}_t,Z^{z_n}_t)] \,dt + M_{\tau_n} \;,
\end{equation}
where
\[
M_t = \int_0^t e^{-\lambda s}\frac{\partial{\varphi}}{\partial{x}}(X^{\bar{u},x_n}_s,Z^{z_n}_s)\sigma X^{\bar{u},x_n}_s \,dW_s
+ \int_0^t e^{-\lambda s}\frac{\partial{\varphi}}{\partial{z}}(X^{\bar{u},x_n}_s,Z^{z_n}_s)\sigma_Z(Z^{z_n}_s) \,dW^Z_s \;.
\]
Clearly $\{M_t\}_{t\ge0}$ is a local martingale (having  $\{\tau_n\}_{n\ge0}$ as localizing sequence of stopping times) because the integrand functions are continuous and hence bounded on the compact sets. 
Taking expectations in equation \eqref{eqn:phi_viscosity} and using the previous inequality yields
\[
\begin{split}
&\gamma_n+\varphi(x_n,z_n) \\
&= \gamma_n - \mathbb{E}\biggl[ \int_0^{\tau_n} e^{-\lambda t}[\mathcal{L}^{\bar{u}} \varphi(X^{\bar{u},x_n}_t,Z^{z_n}_t) - \lambda\varphi(X^{\bar{u},x_n}_t,Z^{z_n}_t)]\,dt \biggr]
+ \mathbb{E}\biggl[ e^{-\lambda \tau_n}\varphi(X^{\bar{u},x_n}_{\tau_n},Z^{z_n}_{\tau_n}) \biggr] \\
&\le \mathbb{E}\biggl[ \int_0^{\tau_n} e^{-\lambda t}f(X^{\bar{u},x_n}_t,Z^{z_n}_t,\bar{u})\,dt \biggr]
+ \mathbb{E}\biggl[ e^{-\lambda \tau_n}\varphi(X^{\bar{u},x_n}_{\tau_n},Z^{z_n}_{\tau_n}) \biggr] \;,
\end{split}
\]
that is, dividing by $h_n$ (using that $\tau_n\le h_n$),
\[
\frac{\gamma_n}{h_n} \le \mathbb{E}\biggl[ \frac{1}{h_n}\int_0^{\tau_n}e^{-\lambda t}[\mathcal{L}^{\bar{u}} \varphi(X^{\bar{u},x_n}_t,Z^{z_n}_t) + f(X^{\bar{u},x_n}_t,Z^{z_n}_t,\bar{u}) 
- \lambda\varphi(X^{\bar{u},x_n}_t,Z^{z_n}_t)]\,dt \biggr] \;.
\]
Letting $n\to+\infty$ we have that $X^{\bar{u},x_n}_t\to X^{\bar{u},\bar{x}}_t$ and $Z^{z_n}_t\to Z^{\bar{z}}_t$, $\forall t \geq 0$ $ \mathbb{P}-a.s.$ 
and the right-hand side converges to $\mathcal{L}^{\bar{u}} \varphi(\bar{x},\bar{z}) + f(\bar{x},\bar{z},\bar{u})-\lambda\varphi(\bar{x},\bar{z})$ by the mean value theorem for integrals. Hence
\[
\mathcal{L}^{\bar{u}} \varphi(\bar{x},\bar{z}) + f(\bar{x},\bar{z},\bar{u}) - \lambda \varphi(\bar{x},\bar{z}) \ge 0 \;.
\]
Since $\bar{u}$ is arbitrary, taking the infimum we obtain that $v$ is a viscosity subsolution of equation \eqref{eqn:HJB_general} (see equation \eqref{eqn:vsubsol}).\\

Now we prove that $v$ is a viscosity supersolution. To this end, we take a test function $\varphi\in\mathcal{C}^{2,2}((0,+\infty)\times\mathbb{R})$ such that
\[
0 = (v-\varphi)(\bar{x},\bar{z}) = \min_{(x,z)\in(0,+\infty)\times\mathbb{R}}(v-\varphi)(x,z) \;.
\]
By definition of the value function, we can find a strategy $\{\hat{u}_t\}_{t\ge0}\in\mathcal{U}$ such that
\[
v(x_n,z_n) + h_n^2 \ge \mathbb{E}\biggl[ \int_0^{\tau_n} e^{-\lambda t}
f(X^{\hat{u},x_n}_t,Z^{z_n}_t,\hat{u}_t)\,dt + e^{-\lambda \tau_n}v(X^{\hat{u},x_n}_{\tau_n},Z^{z_n}_{\tau_n}) \biggr] \;,
\]
and hence
\[
\varphi(x_n,z_n) + \gamma_n + h_n^2 \ge \mathbb{E}\biggl[ \int_0^{\tau_n} e^{-\lambda t}
f(X^{\hat{u},x_n}_t,Z^{z_n}_t,\hat{u}_t)\,dt + e^{-\lambda \tau_n}v(X^{\hat{u},x_n}_{\tau_n},Z^{z_n}_{\tau_n}) \biggr] \;.
\]
Using equation \eqref{eqn:phi_viscosity} we obtain that
\begin{multline*}
\gamma_n + h_n^2
+ \mathbb{E}\biggl[ e^{-\lambda \tau_n}\varphi(X^{\hat{u},x_n}_{\tau_n},Z^{z_n}_{\tau_n}) \biggr]
- \mathbb{E}\biggl[ \int_0^{\tau_n} e^{-\lambda t}[\mathcal{L}^{\hat{u}} \varphi(X^{\hat{u},x_n}_t,Z^{z_n}_t) - \lambda \varphi(X^{\hat{u},x_n}_t,Z^{z_n}_t)]\,dt \biggr] \\
\ge \mathbb{E}\biggl[ \int_0^{\tau_n} e^{-\lambda t}f(X^{\hat{u},x_n}_t,Z^{z_n}_t,\hat{u}_t)\,dt \biggr]
+ \mathbb{E}\biggl[ e^{-\lambda \tau_n}\varphi(X^{\hat{u},x_n}_{\tau_n},Z^{z_n}_{\tau_n}) \biggr] \;,
\end{multline*}
and dividing by $h_n$ we get
\[
\begin{split}
\frac{\gamma_n}{h_n} + h_n &\ge \mathbb{E}\biggl[ \frac{1}{h_n}\int_0^{\tau_n}e^{-\lambda t}[\mathcal{L}^{\hat{u}} \varphi(X^{\hat{u},x_n}_t,Z^{z_n}_t) + f(X^{\hat{u},x_n}_t,Z^{z_n}_t,\hat{u}_t) - \lambda \varphi(X^{\hat{u},x_n}_t,Z^{z_n}_t)]\,dt \biggr] \\
&\ge \mathbb{E}\biggl[ \frac{1}{h_n}\int_0^{\tau_n}\inf_{u\in[-U_1,U_2]}\{\mathcal{L}^{u} \varphi(X^{u,x_n}_t,Z^{z_n}_t) + f(X^{u,x_n}_t,Z^{z_n}_t,u) - \lambda \varphi(X^{u,x_n}_t,Z^{z_n}_t) \}\,dt \biggr] \;.
\end{split}
\]
Observing that 
\[
\lim_{n\to+\infty}\frac{\tau_n}{h_n} = \min\{ \lim_{n\to+\infty}\frac{\inf\{ s\ge0 \mid \max\{\abs{X^{\bar{u},x_n}_s-x_n},\abs{Z^{z_n}_s-z_n}\}\ge \epsilon \}}{h_n}, 1 \} = 1 \;,
\]
using the mean value theorem for integrals again, we finally get the inequality
\[
\inf_{u\in[-U_1,U_2]}\{ \mathcal{L}^{u} \varphi(\bar{x},\bar{z}) + f(\bar{x},\bar{z},u) - \lambda\varphi(\bar{x},\bar{z}) \} \le 0 \;,
\]
and hence $v$ is a viscosity supersolution of equation \eqref{eqn:HJB_general} (see equation \eqref{eqn:vsupersol}).
\end{proof}

Now we provide a Verification Theorem based on classical solutions to the HJB equation \eqref{eqn:HJB_general}.

\begin{theorem}\label{verification}
Let $w\colon(0,+\infty)\times\mathbb{R}\to[0,+\infty)$ be a function in $\mathcal{C}^{2,2}((0,+\infty)\times\mathbb{R})$ and suppose that there exists a constant $C_1>0$ such that
\[
\abs{w(x,z)} \le C_1(1+\abs{x}^m) \qquad \forall (x,z)\in(0,+\infty)\times\mathbb{R} \;.
\]
If 
\begin{equation}
\label{eqn:HJBverification1}
\inf_{u\in[-U_1,U_2]}\{ \mathcal{L}^u w(x,z) + f(x,z,u) - \lambda w(x,z) \}\ge 0 
\qquad \forall (x,z)\in(0,+\infty)\times\mathbb{R} \;,
\end{equation}
then $w(x,z)\le v(x,z)$ $\forall (x,z)\in(0,+\infty)\times\mathbb{R}$.\\

Now suppose that there exists a $[-U_1,U_2]$-valued measurable function $u^*(x,z)$ such that
\begin{multline}
\label{eqn:HJBverification2}
\inf_{u\in[-U_1,U_2]}\{ \mathcal{L}^u w(x,z) + f(x,z,u) - \lambda w(x,z) \} \\
= \mathcal{L}^{u^*} w(x,z) + f(x,z,u^*(x,z)) - \lambda w(x,z) = 0
\quad \forall (x,z)\in(0,+\infty)\times\mathbb{R} \;.
\end{multline}
Then $w(x,z) = v(x,z)$ $\forall (x,z)\in(0,+\infty)\times\mathbb{R}$ and $u^*=\{u^*(X^{u^*,x}_t, Z_t)\}_{t\ge0}$ is an optimal (Markovian) control.
\end{theorem}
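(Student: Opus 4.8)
The plan is the standard verification route: apply It\^o's formula to the discounted process $e^{-\lambda t}w(X^{u,x}_t,Z_t)$ along an arbitrary admissible control, use the differential inequality \eqref{eqn:HJBverification1} to get a lower bound for $w$, pass to the infinite horizon by a localization argument, and then specialize to the feedback control to turn the inequality into an equality.

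\textbf{Step 1 (localization and It\^o's formula).} Fix $(x,z)$ and $u\in\mathcal U$. By \eqref{eqn:Xsol} the process $X^{u,x}$ is strictly positive and finite for all $t$, $\mathbb P$-a.s., and $Z$ is a.s.\ finite, so the stopping times $\tau_n=\inf\{t\ge0: X^{u,x}_t\notin(1/n,n)\ \text{or}\ \abs{Z_t}>n\}$ satisfy $\tau_n\uparrow+\infty$ $\mathbb P$-a.s. Applying It\^o's formula to $e^{-\lambda t}w(X^{u,x}_t,Z_t)$ on $[0,T\wedge\tau_n]$ and taking expectations, the stochastic integral part drops out (its integrands, being continuous functions of $(X^{u,x},Z)$, are bounded on $[0,\tau_n]$, so the stopped process is a true martingale), giving
\[
w(x,z)=\mathbb E\Big[e^{-\lambda(T\wedge\tau_n)}w(X^{u,x}_{T\wedge\tau_n},Z_{T\wedge\tau_n})\Big]-\mathbb E\Big[\int_0^{T\wedge\tau_n}e^{-\lambda t}\big(\mathcal L^{u_t}w-\lambda w\big)(X^{u,x}_t,Z_t)\,dt\Big].
\]
By \eqref{eqn:HJBverification1} one has $(\mathcal L^{u_t}w-\lambda w)(X^{u,x}_t,Z_t)\ge-f(X^{u,x}_t,Z_t,u_t)$, hence
\[
w(x,z)\le\mathbb E\Big[\int_0^{T\wedge\tau_n}e^{-\lambda t}f(X^{u,x}_t,Z_t,u_t)\,dt\Big]+\mathbb E\Big[e^{-\lambda(T\wedge\tau_n)}w(X^{u,x}_{T\wedge\tau_n},Z_{T\wedge\tau_n})\Big].
\]

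\textbf{Step 2 (passage to the limit).} Let $n\to+\infty$: the first term converges to $\mathbb E[\int_0^T e^{-\lambda t}f(X^{u,x}_t,Z_t,u_t)\,dt]$ by monotone convergence ($f\ge0$). For the boundary term, the growth bound $\abs{w}\le C_1(1+\abs x^m)$ together with the estimate of Remark \ref{remark:Xdiseq} gives $e^{-\lambda(T\wedge\tau_n)}\abs{w(X^{u,x}_{T\wedge\tau_n},Z_{T\wedge\tau_n})}\le C_1\big(1+x^m e^{(\lambda_m-\lambda)(T\wedge\tau_n)}N_{T\wedge\tau_n}\big)\le C_1\big(1+x^m N_{T\wedge\tau_n}\big)$, where $N_t=e^{-\frac{m^2}{2}\sigma^2 t+m\sigma W_t}$ is a martingale and we used $\lambda>\lambda_m$; since $N_{T\wedge\tau_n}\to N_T$ a.s.\ and in $L^1$ (optional stopping plus Scheff\'e's lemma), the generalized dominated convergence theorem yields
\[
w(x,z)\le\mathbb E\Big[\int_0^{T}e^{-\lambda t}f(X^{u,x}_t,Z_t,u_t)\,dt\Big]+\mathbb E\big[e^{-\lambda T}w(X^{u,x}_T,Z_T)\big].
\]
Now let $T\to+\infty$: the first term increases to $J(x,z,u)$, while $\mathbb E[e^{-\lambda T}\abs{w(X^{u,x}_T,Z_T)}]\le C_1\big(e^{-\lambda T}+x^m e^{(\lambda_m-\lambda)T}\big)\to0$ since $\lambda>\lambda_m>0$, using \eqref{eqn:EXm_finite}. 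Thus $w(x,z)\le J(x,z,u)$ for every $u\in\mathcal U$, and taking the infimum gives $w\le v$.

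\textbf{Step 3 (equality and optimality).} Under \eqref{eqn:HJBverification2} consider the Markovian control $u^*_t=u^*(X^*_t,Z_t)$, with $X^*:=X^{u^*,x}$ the solution of the closed-loop equation; it is admissible, being predictable (a continuous adapted process) and $[-U_1,U_2]$-valued by construction. Along this trajectory \eqref{eqn:HJBverification2} holds with equality, i.e.\ $(\mathcal L^{u^*_t}w-\lambda w)(X^*_t,Z_t)=-f(X^*_t,Z_t,u^*_t)$, so repeating Steps 1--2 with equalities in place of inequalities gives $w(x,z)=J(x,z,u^*)$. Together with $w\le v$ and the trivial bound $v(x,z)\le J(x,z,u^*)$ this forces $w(x,z)=v(x,z)=J(x,z,u^*)$, i.e.\ $w=v$ and $u^*$ is optimal.

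\textbf{Main obstacle.} The delicate point is the interchange of limits in the boundary term $\mathbb E[e^{-\lambda(T\wedge\tau_n)}w(X^{u,x}_{T\wedge\tau_n},Z_{T\wedge\tau_n})]$, which requires an $n$-uniform integrable dominating bound: the polynomial growth of $w$, the exponential-martingale estimate of Remark \ref{remark:Xdiseq}, and the standing assumption $\lambda>\lambda_m$ are precisely what make this work. A secondary issue, in Step 3, is that the optimality claim presupposes that the closed-loop SDE with the merely measurable feedback $u^*(\cdot,\cdot)$ admits a solution defining an admissible process; here one relies on the fact that \eqref{eqn:debt} has the explicit representation \eqref{eqn:Xsol}, which keeps $X^*$ strictly positive and finite so that indeed $u^*\in\mathcal U$.
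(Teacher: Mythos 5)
Your proof is correct and follows essentially the same verification argument as the paper: Itô's formula on $e^{-\lambda t}w(X^{u,x}_t,Z_t)$ with a localizing sequence, the HJB inequality, dominated convergence in $n$ via the polynomial growth of $w$ and the estimate \eqref{eqn:EXm_finite}, the vanishing of the boundary term as $T\to+\infty$ thanks to $\lambda>\lambda_m$, and then equality along the feedback control $u^*$. The only (immaterial) differences are your choice of exit-time localization instead of the paper's quadratic-variation stopping times \eqref{eqn:tau_n}, and your more explicit justification of the $n\to+\infty$ interchange via the dominating martingale $N$ and Scheff\'e's lemma, which the paper leaves implicit.
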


\begin{proof}
Let $w\in\mathcal{C}^{2,2}((0,+\infty)\times\mathbb{R})$. Let us introduce a sequence of stopping times defined by
\begin{multline}
\label{eqn:tau_n}
\tau_n \doteq \inf\{t\ge0 \mid \int_0^t e^{-2\lambda s}\abs*{\frac{\partial{w}}{\partial{x}}(X^{u,x}_s,Z_s)\sigma X^{u,x}_s}^2 \,ds \ge n\} \\
\land \inf\{t\ge0 \mid \int_0^t e^{-2\lambda s}\abs*{\frac{\partial{w}}{\partial{z}}(X^{u,x}_s,Z_s)\sigma_Z(Z_s) }^2 \,ds \ge n\} \;, \qquad n=1,2,\dots \;.
\end{multline}
Applying It\^o's formula to $e^{-\lambda t}w(X^{u,x}_t,Z_t)$ for any arbitrary $u\in\mathcal{U}$ we get
\begin{multline*}
e^{-\lambda T\land\tau_n}w(X^{u,x}_{T\land\tau_n},Z_{T\land\tau_n}) = w(x,z)
+ \int_0^{T\land\tau_n} e^{-\lambda t}[\mathcal{L}^u w(X^{u,x}_t,Z_t) - \lambda w(X^{u,x}_t,Z_t)]\,dt \\
+ \int_0^{T\land\tau_n} e^{-\lambda t}\frac{\partial{w}}{\partial{x}}(X^{u,x}_t,Z_t)\sigma X^{u,x}_t \,dW_t
+ \int_0^{T\land\tau_n} e^{-\lambda t}\frac{\partial{w}}{\partial{z}}(X^{u,x}_t,Z_t)\sigma_Z(Z_t) \,dW^Z_t \;.
\end{multline*}
The last integrals are real martingales by definition of $\tau_n$ (see equation \eqref{eqn:tau_n}), hence taking the expectation and using the inequality \eqref{eqn:HJBverification1} gives
\[
\mathbb{E}[e^{-\lambda T\land\tau_n}w(X^{u,x}_{T\land\tau_n},Z_{T\land\tau_n})] \ge w(x,z) 
- \mathbb{E}\biggl[ \int_0^{T\land\tau_n} e^{-\lambda t}f(X^{u,x}_t,Z_t,u_t)\,dt \biggr] \;.
\]
By the growth condition on $w$ and Proposition \ref{prop:Jfinite} we can apply the dominated convergence theorem, so that letting $n\to+\infty$ gives%
\footnote{Notice that $\tau_n\to+\infty$ because the integrand functions in equation \eqref{eqn:tau_n} are continuous by our assumptions.}
\[
\mathbb{E}[e^{-\lambda T}w(X^{u,x}_T,Z_T)] \ge w(x,z) 
- \mathbb{E}\biggl[ \int_0^T e^{-\lambda t}f(X^{u,x}_t,Z_t,u_t)\,dt \biggr]
 \qquad \forall u\in\mathcal{U} \;.
\]
Recalling equation \eqref{eqn:EXm_finite}, since
\[
\begin{split}
\limsup_{T\to+\infty}e^{-\lambda T} \mathbb{E}[w(X^{u,x}_T,Z_T)]
&\le C_1\limsup_{T\to+\infty}e^{-\lambda T} \mathbb{E}[1+(X^{u,x}_T)^m] \\
&\le 0 \qquad \forall (x,z)\in(0,+\infty)\times\mathbb{R}, \forall u\in\mathcal{U}\;,
\end{split}
\]
we can send $T\to+\infty$ to obtain that
\[
w(x,z)  \le \mathbb{E}\biggl[ \int_0^{+\infty} e^{-\lambda t}f(X^{u,x}_t,Z_t,u_t)\,dt \biggr]
\qquad \forall u\in\mathcal{U} \;,
\]
which implies the first inequality $w(x,z)\le v(x,z)$ $\forall (x,z)\in(0,+\infty)\times\mathbb{R}$.\\

Now we can repeat the same argument choosing the control $\{u^*_t=u^*(X^{u^*,x}_t,Z_t)\}_{t\ge0}$, obtaining the equality
\[
\mathbb{E}[e^{-\lambda T}w(X^{u^*,x}_T,Z_T)] = w(x,z) 
- \mathbb{E}\biggl[ \int_0^T e^{-\lambda t}f(X^{u^*,x}_t,Z_t,u^*_t)\,dt \biggr] \;.
\]
Using the fact that
\[
\liminf_{T\to+\infty}e^{-\lambda T} \mathbb{E}[w(X^{u^*,x}_T,Z_T)] \ge 0 
\qquad \forall (x,z)\in(0,+\infty)\times\mathbb{R} \;,
\]
sending $T\to+\infty$ we deduce
\[
w(x,z) \ge \mathbb{E}\biggl[ \int_0^{+\infty} e^{-\lambda t}f(X^{u^*,x}_t,Z_t,u^*(X^{u^*,x}_t,Z_t))\,dt \biggr] \;,
\]
which shows that $w(x,z) = v(x,z)$ $\forall (x,z)\in(0,+\infty)\times\mathbb{R}$ and that $u^*$ is an optimal control.
\end{proof}


\section{The optimal fiscal policy}
\label{optimal_stra}

In view of the Verification Theorem in this section we aim at investigating the optimal candidate fiscal policy and its properties. 
Putting the expression \eqref{eqn:generator} into the HJB equation \eqref{eqn:HJB_general} and gathering the terms which depend on $u$ gives
\begin{equation}
\label{eqn:HJB_extended}
\begin{split}
&\inf_{u\in[-U_1,U_2]}\{x[-g(z,u)-u]\frac{\partial{v}}{\partial{x}}(x,z) + f(x,z,u) \} +
x r(z)\frac{\partial{v}}{\partial{x}}(x,z)  \\
&+ \frac{1}{2}\sigma^2x^2\frac{\partial^2{v}}{\partial{x^2}}(x,z) 
+\rho\sigma x\sigma_Z(z)\frac{\partial^2{v}}{\partial{x}\partial{z}}(x,z) 
+\mathcal{L}^Zv(x,z) - \lambda v(x,z) = 0 \;.
\end{split}
\end{equation}

We look for a minimizer of the following function:
\begin{equation}
\label{eqn:H}
H(x,z,u) = -x[g(z,u) + u]\frac{\partial{v}}{\partial{x}}(x,z) + f(x,z,u) \;.
\end{equation}
 
Our candidate optimal strategy is given through a function $u^*\colon(0,+\infty)\times\mathbb{R}\to[-U_1,U_2]$ which solves the following minimization problem:
\begin{equation}
\label{eqn:min_pb}
H(x,z,u^*(x,z)) = \min_{u\in[-U_1,U_2]}H(x,z,u) \;.
\end{equation}

To this end, we make use of the following assumptions.

\begin{assumption}
We assume that
\begin{itemize}
\item $g$ is continuous and differentiable in $u\in[-U_1,U_2]$;
\item $f$ is continuous and differentiable in $u\in[-U_1,U_2]$;
\item $v \in C^{2,2}((0, + \infty) \times \mathbb R)$ is a classical solution of equation \eqref{eqn:HJB_extended}.
\end{itemize}
\end{assumption}

We first state the existence and uniqueness of the minimizer of problem \eqref{eqn:min_pb}, then we give a characterization of it.

\begin{prop}
\label{prop:Hconvex}
The problem \eqref{eqn:min_pb} admits a minimizer $u^*(x,z)\in[-U_1,U_2]$ for any $(x,z)\in(0,+\infty)\times\mathbb{R}$. Moreover, if $H$ is strictly convex, the minimizer is also unique.
\end{prop}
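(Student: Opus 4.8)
The plan is to handle the two assertions in turn, since the existence of a minimizer is a compactness statement while uniqueness follows purely from strict convexity.

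For existence, I would fix an arbitrary $(x,z)\in(0,+\infty)\times\mathbb{R}$ and regard $u\mapsto H(x,z,u)$ as a function on the interval $[-U_1,U_2]$. For such fixed $(x,z)$ the quantities $x$ and $\tfrac{\partial v}{\partial x}(x,z)$ are constants, the latter being well defined because $v\in\mathcal{C}^{2,2}((0,+\infty)\times\mathbb{R})$. Hence the term $-x[g(z,u)+u]\tfrac{\partial v}{\partial x}(x,z)$ is continuous in $u$ by the assumed continuity of $g(z,\cdot)$ on $[-U_1,U_2]$, and $f(x,z,\cdot)$ is continuous on $[-U_1,U_2]$ by assumption; therefore $H(x,z,\cdot)$ is continuous on the nonempty compact set $[-U_1,U_2]$. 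The Weierstrass extreme value theorem then produces a point $u^*(x,z)\in[-U_1,U_2]$ at which the infimum in \eqref{eqn:min_pb} is attained, giving a (not necessarily unique) minimizer for every $(x,z)$.

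For uniqueness under the additional hypothesis that $H$ is strictly convex (in the variable $u$), I would use the classical fact that a strictly convex function on a convex set admits at most one minimizer: if $u_1\neq u_2$ in $[-U_1,U_2]$ both realized the minimal value $m=\min_{u\in[-U_1,U_2]}H(x,z,u)$, then, since $[-U_1,U_2]$ is convex, the midpoint $\tfrac{u_1+u_2}{2}$ is admissible and strict convexity gives $H\!\left(x,z,\tfrac{u_1+u_2}{2}\right)<\tfrac12 H(x,z,u_1)+\tfrac12 H(x,z,u_2)=m$, contradicting the definition of $m$. Combining this with the existence part yields exactly one minimizer $u^*(x,z)$.

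I do not anticipate any genuine obstacle here; the proof is essentially routine. The only points that deserve explicit mention are that the strict convexity is meant with respect to $u$ only, that the interval $[-U_1,U_2]$ is convex so the midpoint argument remains inside the admissible set, and that continuity of $H(x,z,\cdot)$ is inherited from the standing differentiability (hence continuity) assumptions on $g$ and $f$ in $u$ together with the $\mathcal{C}^{2,2}$ regularity of $v$.
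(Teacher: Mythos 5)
Your proposal is correct and follows essentially the same route as the paper: existence via continuity of $H(x,z,\cdot)$ on the compact interval $[-U_1,U_2]$ and the Weierstrass theorem, uniqueness via the standard midpoint argument for strictly convex functions. You merely spell out the details (continuity of $H$ inherited from the assumptions on $f$, $g$ and the $\mathcal{C}^{2,2}$ regularity of $v$) that the paper leaves implicit.
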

\begin{proof}
The existence of a minimizer immediately follows by the compactness of the interval $[-U_1,U_2]$ and the Weierstrass Theorem. Moreover, the strictly convexity of $H$ implies the uniqueness of the minimizer by classical arguments.  
\end{proof}
 
 \begin{remark}
 If $f$ and $g$ are both linear on $u\in[-U_1,U_2]$, $H(x,z,u)$ is linear and the minimizer is also unique.
  \end{remark}


\begin{prop}
\label{prop:optimal_Hconvex}
Suppose that  $\frac{\partial^2{H}}{\partial{u}^2}(x,z,u)>0$, $\forall (x,z,u) (0,+\infty)\times\mathbb{R}\in[-U_1,U_2]$. Then there exists a unique minimizer $u^*(x,z)$ for the problem \eqref{eqn:min_pb}. Moreover, $u^*(x,z)$ admits the following expression:
\begin{multline*}
u^*(x,z) = \\
\begin{cases}
	-U_1
	& (x,z)\in\Set{(x,z)\in(0,+\infty)\times\mathbb{R} \mid\frac{\partial{f}}{\partial{u}}(x,z,-U_1) 
	\ge x\frac{\partial{v}}{\partial{x}}(x,z)\bigl[ \frac{\partial{g}}{\partial{u}}(z,-U_1) + 1 \bigr] }
	\\
	U_2 
	& (x,z)\in\Set{(x,z)\in(0,+\infty)\times\mathbb{R} \mid\frac{\partial{f}}{\partial{u}}(x,z,U_2) 
	\le x\frac{\partial{v}}{\partial{x}}(x,z)\bigl[ \frac{\partial{g}}{\partial{u}}(z,U_2) + 1 \bigr] }
	\\
	\hat{u}(x,z) & \text{otherwise} \;,
\end{cases}
\end{multline*}
where $\hat{u}(x,z)$ denotes the solution to
\begin{equation}
\label{eqn:firstordercond}
x\frac{\partial{v}}{\partial{x}}(x,z)\biggl[ \frac{\partial{g}}{\partial{u}}(z,u) + 1 \biggr]
= \frac{\partial{f}}{\partial{u}}(x,z,u) \quad \forall (x,z) \in (0,+\infty)\times\mathbb{R}\;.
\end{equation}
\end{prop}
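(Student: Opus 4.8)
The plan is to reduce everything to the elementary analysis of the scalar map $u\mapsto H(x,z,u)$ on the compact interval $[-U_1,U_2]$, exploiting the strict convexity granted by the hypothesis. First I would differentiate the expression \eqref{eqn:H} in $u$: by the differentiability assumptions on $f$ and $g$,
\[
\frac{\partial H}{\partial u}(x,z,u) = -x\frac{\partial v}{\partial x}(x,z)\left[\frac{\partial g}{\partial u}(z,u) + 1\right] + \frac{\partial f}{\partial u}(x,z,u)
\]
is well-defined and continuous in $u$, and the assumption $\frac{\partial^2 H}{\partial u^2}>0$ says it is \emph{strictly increasing} in $u$. Hence $H(x,z,\cdot)$ is strictly convex on $[-U_1,U_2]$, and existence and uniqueness of the minimizer $u^*(x,z)$ follow from Proposition \ref{prop:Hconvex} (equivalently, from Weierstrass together with strict convexity).

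Next I would carry out the standard three-way case distinction driven by the signs of the boundary derivatives. If $\frac{\partial H}{\partial u}(x,z,-U_1)\ge 0$, then by strict monotonicity $\frac{\partial H}{\partial u}\ge 0$ on the whole interval, so $H(x,z,\cdot)$ is nondecreasing and its minimum is attained at $u^*=-U_1$; rewriting the inequality $\frac{\partial H}{\partial u}(x,z,-U_1)\ge 0$ gives exactly the first region in the statement. Symmetrically, if $\frac{\partial H}{\partial u}(x,z,U_2)\le 0$, then $H(x,z,\cdot)$ is nonincreasing and $u^*=U_2$, yielding the second region. In the remaining case $\frac{\partial H}{\partial u}(x,z,-U_1)<0<\frac{\partial H}{\partial u}(x,z,U_2)$, the intermediate value theorem plus strict monotonicity produce a unique interior zero $\hat u(x,z)$ of $\frac{\partial H}{\partial u}(x,z,\cdot)=0$, which is then the global minimizer on the interval; that equation is precisely \eqref{eqn:firstordercond}.

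I would also note two bookkeeping points so that $u^*$ is a well-defined function: the first two regions are disjoint, because $-U_1<U_2$ and $\frac{\partial H}{\partial u}(x,z,\cdot)$ is strictly increasing, so $\frac{\partial H}{\partial u}(x,z,-U_1)\ge 0$ forces $\frac{\partial H}{\partial u}(x,z,U_2)>0$, excluding the second condition; and the three cases are manifestly exhaustive. Measurability of $(x,z)\mapsto u^*(x,z)$ is then immediate: the two boundary regions are cut out by measurable inequalities in $(x,z)$, and on the interior region $\hat u(x,z)$ depends measurably on $(x,z)$ since it solves a continuous equation whose coefficients depend continuously on $(x,z)$ (the implicit function theorem applies thanks to $\frac{\partial^2 H}{\partial u^2}>0$). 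There is essentially no hard step here — the only care needed is in verifying the disjointness/exhaustiveness of the cases and the measurability claim — so I would state these briefly and let the convexity argument carry the rest.
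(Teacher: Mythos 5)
Your proof is correct and follows essentially the same route as the paper's: compute $\frac{\partial H}{\partial u}$, use strict convexity to split into the two boundary cases and the interior first-order-condition case, exactly as in the paper's three-case argument. The only additions (disjointness/exhaustiveness of the regions and measurability of $u^*$) are sensible bookkeeping that the paper leaves implicit.
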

\begin{proof}
Existence and uniqueness of the minimizer are guaranteed by classical arguments, since $H$ is continuous and strictly convex in $u\in[-U_1,U_2]$. 
Observing that
\begin{equation}
\label{eqn:dH}
\frac{\partial{H}}{\partial{u}}(x,z,u) =
\frac{\partial{f}}{\partial{u}}(x,z,u) 
 - x\frac{\partial{v}}{\partial{x}}(x,z)\biggl[ \frac{\partial{g}}{\partial{u}}(z,u) + 1 \biggr] \;,
\end{equation}
we have only three cases.
\begin{enumerate}
\item If $(x,z) \in (0,+\infty)\times\mathbb{R}$ are such that $\frac{\partial{H}}{\partial{u}}(x,z,-U_1)\ge0$, i.e.
\[
\frac{\partial{f}}{\partial{u}}(x,z,-U_1) 
\ge x\frac{\partial{v}}{\partial{x}}(x,z)\bigl[ \frac{\partial{g}}{\partial{u}}(z,-U_1) + 1 \bigr] \;,
\]
then we must have 
\[
\frac{\partial{H}}{\partial{u}}(x,z,u)\geq \frac{\partial{H}}{\partial{u}}(x,z,-U_1) 
\ge0 \qquad \forall u\in[-U_1,U_2] \;,
\]
because of the convexity of $H$. Hence $H$ is increasing on the whole interval $[-U_1,U_2]$ and the minimizer turns out to be $u^*(x,z) = -U_1$.

\item If $(x,z) \in (0,+\infty)\times\mathbb{R}$ are such that $\frac{\partial{H}}{\partial{u}}(x,z,U_2)\le0$, i.e. 
\[
\frac{\partial{f}}{\partial{u}}(x,z,U_2) 
\le x\frac{\partial{v}}{\partial{x}}(x,z)\bigl[ \frac{\partial{g}}{\partial{u}}(z,U_2) + 1 \bigr] \;,
\]
then we have that
\[
\frac{\partial{H}}{\partial{u}}(x,z,u) \leq \frac{\partial{H}}{\partial{u}}(x,z,U_2)\le 0
\qquad \forall u\in[-U_1,U_2] \;,
\]
so that $H$ is decreasing in $u\in[-U_1,U_2]$ and therefore $u^*(x,z) = U_2$ is the minimizer.\\

\item Finally, when $\frac{\partial{H}}{\partial{u}}(x,z,-U_1)<0$ and $\frac{\partial{H}}{\partial{u}}(x,z,U_2)>0$, since  $\frac{\partial{H}}{\partial{u}}$ is continuous in $u\in[-U_1,U_2]$, there exists $\hat{u}(x,z)$ such that $\frac{\partial{H}}{\partial{u}}(x,z,\hat{u}(x,z))=0$ (see equation \eqref{eqn:firstordercond}) and this stationary point coincides with the minimizer.
\end{enumerate}
\end{proof}


\subsection{Some special cases}
In this section we investigate some cases of interest. 
Let us first establish a slightly general result, when the cost function does not depend explicitly on $u$.

\begin{prop}\label{prop1_f(x,z)}
Suppose that the cost function $f$ does not depend on $u$, i.e. $f(x,z,u) = f(x,z)$ and $\frac{\partial^2{H}}{\partial{u}^2}(x,z,u)>0$, $\forall (x,z,u) (0,+\infty)\times\mathbb{R}\in[-U_1,U_2]$.
Then the minimizer of  \eqref{eqn:min_pb} is given by
\begin{equation}
\label{eqn:u*z}
u^*(x,z) = 
\begin{cases} -U_1
& (x,z)\in\Set{(x,z)\in(0,+\infty)\times\mathbb{R} \mid  0
	\ge \frac{\partial{v}}{\partial{x}}(x,z)\bigl[ \frac{\partial{g}}{\partial{u}}(z,-U_1) + 1 \bigr] }
	\\
	U_2 
	& (x,z)\in\Set{(x,z)\in(0,+\infty)\times\mathbb{R}  \mid  0 \le \frac{\partial{v}}{\partial{x}}(x,z)\bigl[ \frac{\partial{g}}{\partial{u}}(z,U_2) + 1 \bigr] }
	\\
	\hat{u}(z) & \text{otherwise} \;,
\end{cases}
\end{equation}
where $\hat{u}(z)$ is the unique solution to
\begin{equation}\label{eqn:hatu_match}
\frac{\partial{g}}{\partial{u}}(z,u) = -1  \quad \forall z \in \mathbb{R} \;.
\end{equation}
\end{prop}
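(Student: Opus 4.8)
The plan is to read off the result from Proposition \ref{prop:optimal_Hconvex} by specializing to $\frac{\partial f}{\partial u}\equiv 0$, after first recording a structural consequence of the standing hypothesis $\frac{\partial^2 H}{\partial u^2}>0$. Differentiating \eqref{eqn:dH} in $u$ gives
\[
\frac{\partial^2 H}{\partial u^2}(x,z,u) = \frac{\partial^2 f}{\partial u^2}(x,z,u) - x\frac{\partial v}{\partial x}(x,z)\frac{\partial^2 g}{\partial u^2}(z,u),
\]
which, since $f$ does not depend on $u$, equals $-x\frac{\partial v}{\partial x}(x,z)\frac{\partial^2 g}{\partial u^2}(z,u)$. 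As $x>0$ and this quantity is strictly positive for every $(x,z,u)$, at each point one must have $\frac{\partial v}{\partial x}(x,z)\neq 0$ (otherwise the product vanishes); moreover, fixing $z$ and any $x>0$, the factor $-x\frac{\partial v}{\partial x}(x,z)$ is a nonzero constant in $u$, so $\frac{\partial^2 g}{\partial u^2}(z,\cdot)$ has constant sign and hence $u\mapsto\frac{\partial g}{\partial u}(z,u)$ is strictly monotone.

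With $\frac{\partial f}{\partial u}\equiv 0$, the three regions in Proposition \ref{prop:optimal_Hconvex} are defined by $0\ge x\frac{\partial v}{\partial x}(x,z)\bigl[\frac{\partial g}{\partial u}(z,-U_1)+1\bigr]$, by $0\le x\frac{\partial v}{\partial x}(x,z)\bigl[\frac{\partial g}{\partial u}(z,U_2)+1\bigr]$, and by their complement; dividing through by $x>0$ produces precisely the sets in \eqref{eqn:u*z}. Likewise the first-order condition \eqref{eqn:firstordercond} becomes $x\frac{\partial v}{\partial x}(x,z)\bigl[\frac{\partial g}{\partial u}(z,u)+1\bigr]=0$, and since $x\frac{\partial v}{\partial x}(x,z)\neq 0$ this is equivalent to $\frac{\partial g}{\partial u}(z,u)=-1$, i.e.\ \eqref{eqn:hatu_match}; this equation does not involve $x$, so the interior minimizer is a function of $z$ alone, and its uniqueness follows from the strict monotonicity of $\frac{\partial g}{\partial u}(z,\cdot)$ noted above. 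Existence and uniqueness of the minimizer $u^*(x,z)$ itself are inherited directly from Proposition \ref{prop:optimal_Hconvex}, whose hypothesis is unchanged here.

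The only point that is not pure substitution is the observation that $\frac{\partial^2 H}{\partial u^2}>0$ forces $\frac{\partial v}{\partial x}$ to be non-vanishing; this is what both legitimizes cancelling the factor $x\frac{\partial v}{\partial x}$ in the stationarity equation and yields strict monotonicity of $\frac{\partial g}{\partial u}(z,\cdot)$, and thus it is the natural place to be careful. Everything else reduces to plugging $\frac{\partial f}{\partial u}=0$ into the already-established Proposition \ref{prop:optimal_Hconvex}.
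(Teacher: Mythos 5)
Your proposal is correct and follows essentially the same route as the paper: the paper likewise computes $\frac{\partial H}{\partial u}$ and $\frac{\partial^2 H}{\partial u^2}$, observes that strict convexity of $H$ in $u$ forces $\frac{\partial v}{\partial x}(x,z)\neq 0$, and then declares that the argument "follows the same lines" of Proposition \ref{prop:optimal_Hconvex}. Your added remark that $\frac{\partial^2 g}{\partial u^2}(z,\cdot)$ must have constant sign, so that $\frac{\partial g}{\partial u}(z,\cdot)$ is strictly monotone and $\hat{u}(z)$ is unique and independent of $x$, merely makes explicit a point the paper leaves implicit.
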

\begin{proof}
Observing that
\begin{equation}\label{C1}
\frac{\partial{H}}{\partial{u}}(x,z,u) =
 - x\frac{\partial{v}}{\partial{x}}(x,z)\biggl[ \frac{\partial{g}}{\partial{u}}(z,u) + 1 \biggr] \;,
\end{equation}
\begin{equation}\label{C2}
\frac{\partial^2 {H}}{\partial{u}^2}(x,z,u) =
 - x\frac{\partial{v}}{\partial{x}}(x,z)
 \frac{\partial^2{g}}{\partial{u}^2}(z,u) >0 \;, 
\end{equation}
hence $\frac{\partial{v}}{\partial{x}}(x,z) \neq 0$, $\forall (x,z) \in (0,+\infty)\times\mathbb{R}$ and the proof follows the same lines of Proposition \ref{prop:optimal_Hconvex}. 
\end{proof}
%
%

The result of Proposition \ref{prop1_f(x,z)} is interesting. When the fiscal policy has a nonlinear impact on GDP growth, the government will select the primary surplus in order that the effect on GDP matches that on debt, in general (see equation \eqref{eqn:hatu_match}). When this is not possible, the extreme policies are chosen. The choice between $-U_1$ and $U_2$ and the corresponding economic interpretation crucially depend on the government objective, see the comments below Proposition \ref{prop:f(x,z)}.\\

In the case  of Example \ref{example:g}, that is  when the fiscal policy has a linear impact on the GDP growth rate, (see equation \eqref{linear1}), Proposition \ref{prop1_f(x,z)} does not apply, because the condition $\frac{\partial^2{H}}{\partial{u}^2}(x,z,u)>0$, $\forall (x,z,u) (0,+\infty)\times\mathbb{R}\in[-U_1,U_2]$ is not fulfilled.
In the next proposition we discuss a tailor-made result for this special case.

\begin{prop}\label{prop1bis_f(x,z)}
Suppose that the cost function $f$ does not depend on $u$, i.e. $f(x,z,u) = f(x,z)$ and 
let the GDP growth rate given by 
$$ g(z,u) = g_0(z) - \alpha(z) u \;,$$
with $g_0\colon\mathbb{R}\to\mathbb{R}$ and $\alpha(z)>0$, $\alpha(z) \neq 1$, $\forall z \in \mathbb R$.
Then the minimizer of  \eqref{eqn:min_pb} is given by
\begin{equation}
\label{eqn:u*z}
u^*(x,z) = 
\begin{cases} -U_1
& (x,z)\in\Set{(x,z)\in(0,+\infty)\times\mathbb{R} \mid  
	 \frac{\partial{v}}{\partial{x}}(x,z) (\alpha(z) -1) \geq 0 \bigr] }
	\\
	U_2 
	& (x,z)\in\Set{(x,z)\in(0,+\infty)\times\mathbb{R}  \mid   \frac{\partial{v}}{\partial{x}}(x,z) (\alpha(z) -1)  < 0 \bigr] }.
	\end{cases}
\end{equation}
\end{prop}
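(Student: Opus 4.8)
The plan is to specialize the analysis of Proposition~\ref{prop:optimal_Hconvex} (or rather the degenerate, linear-in-$u$ situation already anticipated in the Remark following Proposition~\ref{prop:Hconvex}) to the linear GDP growth rate $g(z,u)=g_0(z)-\alpha(z)u$ together with a cost $f$ not depending on $u$. First I would substitute this $g$ into the objective function $H$ defined in \eqref{eqn:H}. Since $f$ has no $u$-dependence and $-g(z,u)-u = -g_0(z)+(\alpha(z)-1)u$, we obtain
\[
H(x,z,u) = -x\frac{\partial v}{\partial x}(x,z)\bigl[g_0(z) + (1-\alpha(z))u\bigr] + f(x,z)\;,
\]
which is affine in $u$ for each fixed $(x,z)$. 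Hence the minimization over the compact interval $[-U_1,U_2]$ is attained at one of the two endpoints, the choice being dictated by the sign of the slope $\frac{\partial H}{\partial u}(x,z,u)$.

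Next I would compute that slope explicitly: from the display above,
\[
\frac{\partial H}{\partial u}(x,z,u) = -x\frac{\partial v}{\partial x}(x,z)\,(1-\alpha(z)) = x\frac{\partial v}{\partial x}(x,z)\,(\alpha(z)-1)\;,
\]
which is independent of $u$, as expected for an affine function. Since $x>0$ throughout the domain, the sign of the slope is the sign of $\frac{\partial v}{\partial x}(x,z)(\alpha(z)-1)$. If this quantity is $\ge 0$, then $H$ is nondecreasing in $u$ on $[-U_1,U_2]$, so the minimizer is the left endpoint $u^*(x,z)=-U_1$; if it is $<0$, then $H$ is strictly decreasing and the minimizer is the right endpoint $u^*(x,z)=U_2$. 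This yields exactly the claimed case split in \eqref{eqn:u*z}. One should note that the two sets in the statement partition $(0,+\infty)\times\mathbb{R}$, so $u^*$ is well defined (with the boundary case $\frac{\partial v}{\partial x}(x,z)(\alpha(z)-1)=0$ assigned, by convention, to the first branch, where any choice of $u$ is in fact optimal since $H$ is then constant in $u$).

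I do not expect a genuine obstacle here: the whole point of the linear example is that the convexity hypothesis $\frac{\partial^2 H}{\partial u^2}>0$ used in Propositions~\ref{prop:optimal_Hconvex} and \ref{prop1_f(x,z)} fails (indeed $\frac{\partial^2 H}{\partial u^2}\equiv 0$), and the argument is replaced by the trivial observation that an affine function on an interval attains its minimum at an endpoint. The only mild point of care is that, unlike the strictly convex case, the minimizer need not be unique on the set $\{\frac{\partial v}{\partial x}(x,z)(\alpha(z)-1)=0\}$; there the function $H$ is flat in $u$ and every admissible control is optimal, so selecting $-U_1$ there is merely a (measurable) convention that keeps $u^*$ well defined as a Markovian feedback map, consistent with the hypotheses of the Verification Theorem~\ref{verification}. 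The economic reading — $-U_1$ (maximal deficit) is chosen when the sign of $\frac{\partial v}{\partial x}$ and of $\alpha(z)-1$ agree, $U_2$ (maximal surplus) otherwise — then follows exactly as in the comments the authors give after Proposition~\ref{prop:f(x,z)}.
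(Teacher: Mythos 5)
Your proposal is correct and follows essentially the same route as the paper: the paper's proof consists precisely of observing that $H(x,z,u) = -x[g_0(z)+(1-\alpha(z))u]\frac{\partial v}{\partial x}(x,z)+f(x,z)$ is affine in $u$, so the minimum over $[-U_1,U_2]$ sits at an endpoint determined by the sign of the slope $x\frac{\partial v}{\partial x}(x,z)(\alpha(z)-1)$. Your additional remarks on the degenerate case where the slope vanishes (where $-U_1$ is just a measurable selection among equally optimal controls) are accurate and slightly more careful than the paper's one-line argument.
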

\begin{proof}
The statement follows by observing that $H$ is a linear function on $ u \in[-U_1,U_2]$
$$H(x,z,u) = -x[g_0(z)  + (1-  \alpha(z)) u]\frac{\partial{v}}{\partial{x}}(x,z) + f(x,z) \;. $$
\end{proof}

In the next proposition we discuss the case where the cost function $f(x,z)$ does not depend on $u$ and it is increasing  in $x \in (0, + \infty)$. This situation refers to the case where debt generates a disutility for the  government of the country and thus it aims to reduce the debt-to-GDP ratio. We refer to this case as the debt reduction problem, see Section \ref{debt-red}.

\begin{prop}
\label{prop:f(x,z)}
Suppose that the cost function $f$ does not depend on $u$, i.e. $f(x,z,u) = f(x,z)$, with $f(x,z)$ increasing  in $x \in (0, + \infty)$. 
Assuming $\frac{\partial{v}}{\partial{x}}(x,z) \neq 0$ for any $(x,z) \in (0, + \infty) \times \mathbb R$ and  $\frac{\partial^2{g}}{\partial{u}^2}(z,u) < 0$ for any $(z,u) \in \mathbb R \times [-U_1,U_2]$, the unique minimizer of \eqref{eqn:min_pb} is given by
\[
u^*(z) = -U_1 \lor \hat{u}(z) \land U_2 \;,
\]
where $\hat{u}(z)$ is the unique solution to
\[
\frac{\partial{g}}{\partial{u}}(z,u) = -1, \quad \forall z \in \mathbb{R} \;.
\]
Precisely, $u^*(z)$ admits the following structure:
\begin{equation}
\label{eqn:u*z}
u^*(z) = 
\begin{cases}
	-U_1
	& z\in\Set{z\in\mathbb{R}\mid\frac{\partial{g}}{\partial{u}}(z,-U_1) < -1}
	\\
	U_2 
	& z\in\Set{z\in\mathbb{R}\mid\frac{\partial{g}}{\partial{u}}(z,U_2) > -1}
	\\
	\hat{u}(z) & \text{otherwise} \;.
\end{cases}
\end{equation}
\end{prop}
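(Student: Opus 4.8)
The plan is to reduce the claim to Proposition~\ref{prop1_f(x,z)} (equivalently, to the specialization of Proposition~\ref{prop:optimal_Hconvex} to the case $f=f(x,z)$) by first checking that their hypotheses hold here, and then to rewrite the resulting formula. The only preliminary step is to pin down the sign of $\frac{\partial v}{\partial x}$. Since $f(x,z,u)=f(x,z)$ is increasing in $x$, Proposition~\ref{INCR} gives that $v$ is increasing in $x$, so $\frac{\partial v}{\partial x}(x,z)\ge 0$ on $(0,+\infty)\times\mathbb{R}$; together with the standing hypothesis $\frac{\partial v}{\partial x}\neq 0$ this yields $\frac{\partial v}{\partial x}(x,z)>0$ everywhere. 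Combined with $x>0$ and $\frac{\partial^2 g}{\partial u^2}(z,u)<0$, this gives $\frac{\partial^2 H}{\partial u^2}(x,z,u)=-x\frac{\partial v}{\partial x}(x,z)\frac{\partial^2 g}{\partial u^2}(z,u)>0$ for all $(x,z,u)$, so $H(x,z,\cdot)$ is continuous and strictly convex on the compact interval $[-U_1,U_2]$; existence and uniqueness of the minimizer then follow from the Weierstrass theorem and strict convexity, as in Proposition~\ref{prop:Hconvex}.

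Next I would specialize the first-order condition. Since $f$ does not depend on $u$, \eqref{eqn:dH} becomes $\frac{\partial H}{\partial u}(x,z,u)=-x\frac{\partial v}{\partial x}(x,z)\bigl[\frac{\partial g}{\partial u}(z,u)+1\bigr]$, and because $x\frac{\partial v}{\partial x}(x,z)>0$ the sign of $\frac{\partial H}{\partial u}$ at $(x,z,u)$ equals the sign of $-\bigl(\frac{\partial g}{\partial u}(z,u)+1\bigr)$; in particular it does not depend on $x$, and the interior stationarity equation \eqref{eqn:firstordercond} collapses to $\frac{\partial g}{\partial u}(z,u)=-1$. Moreover the existence of $\frac{\partial^2 g}{\partial u^2}<0$ makes $\frac{\partial g}{\partial u}(z,\cdot)$ continuous and strictly decreasing on $[-U_1,U_2]$, so this equation has at most one root, which is a function of $z$ alone; denote it $\hat u(z)$.

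Then I would run the three-case dichotomy exactly as in the proof of Proposition~\ref{prop:optimal_Hconvex}. On $\{z\in\mathbb{R}\mid \frac{\partial g}{\partial u}(z,-U_1)<-1\}$, strict monotonicity forces $\frac{\partial g}{\partial u}(z,u)+1<0$, hence $\frac{\partial H}{\partial u}(x,z,u)>0$, for every $u\in[-U_1,U_2]$, so $H$ is strictly increasing on the interval and the minimizer is $-U_1$; symmetrically, on $\{z\in\mathbb{R}\mid \frac{\partial g}{\partial u}(z,U_2)>-1\}$ one gets $\frac{\partial H}{\partial u}<0$ throughout and the minimizer is $U_2$; in the remaining case $\frac{\partial g}{\partial u}(z,-U_1)\ge -1\ge \frac{\partial g}{\partial u}(z,U_2)$ (the two regions above are disjoint since $-U_1<U_2$), and the intermediate value theorem produces the unique root $\hat u(z)\in[-U_1,U_2]$ of $\frac{\partial H}{\partial u}(x,z,\cdot)$, which is the minimizer by strict convexity. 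Finally, the same monotonicity yields $\frac{\partial g}{\partial u}(z,-U_1)<-1\iff -U_1>\hat u(z)$ and $\frac{\partial g}{\partial u}(z,U_2)>-1\iff U_2<\hat u(z)$, so the three branches are precisely $u^*(z)=(-U_1)\lor\hat u(z)\land U_2$, with the convention that $\hat u(z)$, defined as the solution of $\frac{\partial g}{\partial u}(z,u)=-1$, is clipped to $[-U_1,U_2]$ (it genuinely lies in $[-U_1,U_2]$ exactly on the ``otherwise'' region); the boundary cases $\frac{\partial g}{\partial u}(z,-U_1)=-1$ or $\frac{\partial g}{\partial u}(z,U_2)=-1$ are harmless, since there $\hat u(z)$ coincides with the corresponding endpoint.

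I expect no serious obstacle: everything rests on the convexity estimate of the first paragraph and a one-dimensional monotonicity argument for $\frac{\partial g}{\partial u}(z,\cdot)$. The single point that genuinely uses all the hypotheses — and where a careless argument would fail — is the sign determination of $\frac{\partial v}{\partial x}$: with merely $\frac{\partial v}{\partial x}\neq 0$ one could not conclude, the analysis would bifurcate according to the sign of $\frac{\partial v}{\partial x}$, and the displayed formula would hold only on the branch $\frac{\partial v}{\partial x}>0$; invoking Proposition~\ref{INCR} is exactly what removes the other branch.
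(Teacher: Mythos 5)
Your proposal is correct and follows essentially the same route as the paper: invoke Proposition \ref{INCR} to get $\frac{\partial v}{\partial x}\ge 0$, combine with the nondegeneracy hypothesis and $\frac{\partial^2 g}{\partial u^2}<0$ to obtain strict convexity of $H$ in $u$, reduce the first-order condition to $\frac{\partial g}{\partial u}(z,u)=-1$, and then run the three-case analysis of Proposition \ref{prop:optimal_Hconvex}. You merely spell out the monotonicity and clipping details that the paper compresses into ``the proof follows the same lines of Proposition \ref{prop:optimal_Hconvex}.''
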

\begin{proof} 
By Proposition \ref{INCR} we have that  $v(x,z)$ is increasing in $x \in (0, + \infty)$ for any $z\in \mathbb R$, hence 
$\frac{\partial{v}}{\partial{x}}(x,z) \geq 0$, $\forall z\in \mathbb R$, which together with the assumptions 
$\frac{\partial{v}}{\partial{x}}(x,z) \neq 0$, for any $(x,z) \in (0, + \infty) \times \mathbb R$ and  $\frac{\partial^2{g}}{\partial{u}^2}(z,u) < 0$ for any $(z,u) \in \mathbb R \times [-U_1,U_2]$ imply
$$\frac{\partial^2 {H}}{\partial{u}^2}(x,z,u) =
 - x\frac{\partial{v}}{\partial{x}}(x,z)
 \frac{\partial^2{g}}{\partial{u}^2}(z,u) > 0 \quad 
 	\forall (x,z,u) \in (0, + \infty) \times \mathbb R \times [-U_1, U_2] \;.$$
Recalling that 
$$\frac{\partial{H}}{\partial{u}}(x,z,u) =
 - x\frac{\partial{v}}{\partial{x}}(x,z)\biggl[ \frac{\partial{g}}{\partial{u}}(z,u) + 1 \biggr] \;, $$ 
 the first order condition reads as 
 if and only if 
 $$\frac{\partial{g}}{\partial{u}}(z,u) + 1 =0 \;, $$
 and the proof follows  the same lines of Proposition \ref{prop:optimal_Hconvex}. 
\end{proof}

The previous result has an intermediate case as in Proposition \ref{prop1_f(x,z)}, with the same interpretation. However, as announced, now we can provide a deeper insights on extreme fiscal policies.\\
The maximum surplus will be applied only if the beneficial impact on debt more than compensates the negative effect on GDP growth. Indeed, the marginal impact of $U_2$ on the GDP growth is measured by $\frac{\partial{g}}{\partial{u}}(z,-U_1)$, which is negative, while the effect on debt is unitary. When the debt can be decreased more than the GDP growth by means of the maximum surplus, $U_2$ is optimal. Similarly, the maximum deficit is chosen when the positive effect on GDP exceeds the negative effect on debt.\\

In the case of Example \ref{example:g}, that is  when the fiscal policy has a linear impact on the GDP growth rate, (see equation \eqref{linear1}) we have an analogous result.

\begin{prop} \label{prop:linear}
Let the GDP growth rate given by 
$$ g(z,u) = g_0(z) - \alpha(z) u \;,$$
with $g_0\colon\mathbb{R}\to\mathbb{R}$ and $\alpha(z)>0$, $\alpha(z) \neq 1$, $\forall z \in \mathbb R$. Then for any running cost function $f(x,z)$ with $f(x,z)$ increasing  in $x \in (0, + \infty)$ $\forall z \in \mathbb R$, assuming $\frac{\partial{v}}{\partial{x}}(x,z) \neq 0$, for any $(x,z) \in (0, + \infty) \times \mathbb R$,  the unique minimizer of \eqref{eqn:min_pb} is given by

\begin{equation}\label{eqn:u*z1}
u^*(z) =  
\begin{cases}
	-U_1
	& z\in\Set{z\in\mathbb{R}\mid \alpha(z)>1}
	\\
	U_2 
	& z\in\Set{z\in\mathbb{R}\mid  \alpha(z)<1} \;.
\end{cases}
\end{equation}
\end{prop}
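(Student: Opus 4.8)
The plan is to reduce Proposition \ref{prop:linear} to the already-proved Proposition \ref{prop:f(x,z)} (or rather, to mimic its argument directly in the linear setting), observing that the only structural difference is that here $g$ is linear in $u$, so $H$ is linear in $u$ and the minimizer is always at an endpoint of $[-U_1,U_2]$. First I would write down $H$ explicitly: since $g(z,u)=g_0(z)-\alpha(z)u$, we have
\[
H(x,z,u) = -x\bigl[g_0(z) + (1-\alpha(z))u\bigr]\frac{\partial v}{\partial x}(x,z) + f(x,z)\;,
\]
which is affine in $u$ with slope
\[
\frac{\partial H}{\partial u}(x,z,u) = -x\,(1-\alpha(z))\,\frac{\partial v}{\partial x}(x,z) = x\,(\alpha(z)-1)\,\frac{\partial v}{\partial x}(x,z)\;.
\]
Because $f(x,z)$ is increasing in $x$, Proposition \ref{INCR} gives $\frac{\partial v}{\partial x}(x,z)\ge 0$, and combined with the hypothesis $\frac{\partial v}{\partial x}(x,z)\neq 0$ we get $\frac{\partial v}{\partial x}(x,z)>0$ for all $(x,z)$. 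Hence the sign of $\frac{\partial H}{\partial u}$ is exactly the sign of $\alpha(z)-1$, which is constant in $u$ for fixed $z$.

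Next I would do the case split on $\alpha(z)$. If $\alpha(z)>1$ then $\frac{\partial H}{\partial u}(x,z,u)>0$ for all $u\in[-U_1,U_2]$, so $H(x,z,\cdot)$ is strictly increasing and its minimum over the interval is attained uniquely at the left endpoint $u^*=-U_1$. If $\alpha(z)<1$ then $\frac{\partial H}{\partial u}(x,z,u)<0$ for all $u$, so $H(x,z,\cdot)$ is strictly decreasing and the unique minimizer is the right endpoint $u^*=U_2$. The assumption $\alpha(z)\neq 1$ rules out the degenerate flat case, so exactly one of these two alternatives holds for each $z$, yielding the stated formula \eqref{eqn:u*z1}; note in particular that $u^*$ depends on $z$ only (not on $x$), since the sign of the slope is $x$-independent once we know $\frac{\partial v}{\partial x}>0$.

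I do not expect a serious obstacle here — the result is essentially a one-line consequence of linearity plus the monotonicity of $v$ from Proposition \ref{INCR}. The only point requiring a little care is justifying strict positivity (not just nonnegativity) of $\frac{\partial v}{\partial x}$: this is where the extra hypothesis $\frac{\partial v}{\partial x}(x,z)\neq 0$ enters, and it is what upgrades "$H$ monotone" to "$H$ strictly monotone", hence "unique minimizer". One should also remark that, unlike Proposition \ref{prop:f(x,z)}, there is no interior branch $\hat u(z)$: since $\frac{\partial g}{\partial u}(z,u)=-\alpha(z)$ is constant in $u$, the first-order condition $\frac{\partial g}{\partial u}=-1$ either holds identically (when $\alpha(z)=1$, excluded) or never, so the stationary-point case is vacuous and only the two boundary regimes survive.
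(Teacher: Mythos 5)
Your proof is correct and follows essentially the same route as the paper: both write out $H$ explicitly as an affine function of $u$, deduce $\frac{\partial v}{\partial x}>0$ from Proposition \ref{INCR} together with the non-vanishing hypothesis, and conclude that the sign of $\alpha(z)-1$ determines whether the minimizer sits at $-U_1$ or $U_2$. Your version is in fact somewhat more explicit than the paper's (which leaves the role of $\frac{\partial v}{\partial x}>0$ implicit), and your closing remark on why the interior branch $\hat u(z)$ is vacuous here is a correct and useful observation.
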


\begin{proof}
Observing that 
$$
H(x,z,u) = -x[g_0(z)+ (1 - \alpha(z))u]\frac{\partial{v}}{\partial{x}}(x,z) + f(x,z) \; ,
$$
we have only two cases.
\begin{enumerate}
\item If  $z\in\Set{z\in\mathbb{R}\mid \alpha(z)>1}$ then $H$ is increasing in $u\in[-U_1,U_2]$ thus the minimizer is $u^*(z) =-U_1$.
\item If $z\in\Set{z\in\mathbb{R}\mid  0<\alpha(z)<1}$ then $H$ is decreasing in $u\in[-U_1,U_2]$ thus the minimizer is $u^*(z) =U_2$.
\end{enumerate}
\end{proof}

In the semplified model of  equation \eqref{linear}, that is when $ \alpha(z) =\alpha$ $\forall z \in \mathbb{R}$, we have the following result. 

\begin{corollary}\label{corollary}
Let the GDP growth rate be given by 
$$ g(z,u) = g_0(z) - \alpha u \;,$$
with $g_0\colon\mathbb{R}\to\mathbb{R}$ and $\alpha>0$, $\alpha\neq 1$. Then for any running cost function $f(x,z)$ increasing in $x \in (0, + \infty)$ $\forall z \in \mathbb R$,  assuming $\frac{\partial{v}}{\partial{x}}(x,z) \neq 0$ for any $(x,z) \in (0, + \infty) \times \mathbb R$,  the unique minimizer of \eqref{eqn:min_pb} is constant and given by

\[
u^* = 
\begin{cases}
	-U_1
	& \text{if } \alpha>1
	\\
	U_2 
	& \text{if } 0<\alpha<1 \;.
\end{cases}
\]
\end{corollary}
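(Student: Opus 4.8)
The plan is to obtain the statement as a direct specialization of Proposition \ref{prop:linear} to the case of a constant coefficient $\alpha(z)\equiv\alpha$. First I would check that all the hypotheses of Proposition \ref{prop:linear} hold: the GDP growth rate has the required affine form $g(z,u)=g_0(z)-\alpha u$ with $\alpha>0$ and $\alpha\neq1$, the cost $f(x,z)$ does not depend on $u$ and is increasing in $x\in(0,+\infty)$, and the standing assumption $\frac{\partial v}{\partial x}(x,z)\neq0$ on $(0,+\infty)\times\mathbb{R}$ is imposed. Then I would simply observe that, for a constant $\alpha$, the two regions in \eqref{eqn:u*z1} degenerate: $\{z\in\mathbb{R}\mid\alpha(z)>1\}$ is all of $\mathbb{R}$ and $\{z\in\mathbb{R}\mid\alpha(z)<1\}$ is empty when $\alpha>1$, and conversely when $0<\alpha<1$. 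In either case the minimizer produced by Proposition \ref{prop:linear} does not depend on $z$ (nor on $x$), so it is the constant $-U_1$ when $\alpha>1$ and the constant $U_2$ when $0<\alpha<1$.

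If a self-contained argument is preferred, I would reproduce the reasoning of the proof of Proposition \ref{prop:linear} in this simpler setting. Plugging $g(z,u)=g_0(z)-\alpha u$ and $f(x,z,u)=f(x,z)$ into \eqref{eqn:H} gives
\[
H(x,z,u) = -x\bigl[g_0(z)+(1-\alpha)u\bigr]\frac{\partial v}{\partial x}(x,z) + f(x,z),
\]
which is affine in $u$ with slope $-x(1-\alpha)\frac{\partial v}{\partial x}(x,z)$. Since $f$ is increasing in $x$, Proposition \ref{INCR} yields that $v$ is increasing in $x$, hence $\frac{\partial v}{\partial x}(x,z)\ge0$; together with the hypothesis $\frac{\partial v}{\partial x}(x,z)\neq0$ this gives $x\frac{\partial v}{\partial x}(x,z)>0$ for every $(x,z)\in(0,+\infty)\times\mathbb{R}$. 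Consequently the slope of $u\mapsto H(x,z,u)$ has the constant sign of $\alpha-1$: strictly positive if $\alpha>1$, strictly negative if $0<\alpha<1$. A strictly monotone affine function on the compact interval $[-U_1,U_2]$ has a unique minimizer, located at the left endpoint $-U_1$ in the increasing case and at the right endpoint $U_2$ in the decreasing case, which is exactly the claimed $u^*$.

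I do not expect any genuine difficulty here, since the corollary is a pure specialization of Proposition \ref{prop:linear}. The only point deserving a line of care is the strict positivity of $\frac{\partial v}{\partial x}$, which combines the monotonicity supplied by Proposition \ref{INCR} with the non-vanishing hypothesis; this is precisely what makes $u\mapsto H(x,z,u)$ \emph{strictly} monotone and therefore guarantees both the bang-bang form and the uniqueness of the minimizer. I would present the deduction from Proposition \ref{prop:linear} as the proof and, optionally, record the direct computation as a remark.
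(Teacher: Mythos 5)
Your proposal is correct and follows essentially the same route as the paper, whose proof of this corollary is precisely the one-line specialization of Proposition \ref{prop:linear} to constant $\alpha(z)\equiv\alpha$. Your optional self-contained computation (affine $H$ in $u$ with slope of constant sign $\alpha-1$, using Proposition \ref{INCR} plus the non-vanishing hypothesis to get $\frac{\partial v}{\partial x}>0$) is just the proof of Proposition \ref{prop:linear} unwound, so nothing genuinely different is being done.
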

\begin{proof}

This is a simple application of Proposition \ref{prop:linear}. \end{proof} 

 \begin{remark}\label{ReC}
If $\alpha>1$ we get that the candidate optimal strategy is $u^*= - U_1$, that is  the optimal choice for the government is to generate the maximum deficit. Indeed, the beneficial effect on GDP exceeds the debt increase. By Remark \ref{sustU} this strategy is  sustainable if
 $$U_1 > - {\min_{z \in \mathbb R}  g_0(z) \over \alpha -1}.$$
It would be natural to assume that $\min_{z \in \mathbb R}  g_0(z)<0$, and hence the right hand side is positive. Then $u^*= - U_1$ would be sustainable if the government can produce enough deficit, which is usually the case. \\ 
Conversely, if $\ 0< \alpha <1$, the candidate optimal strategy is $u^*= U_2$, that is the maximum surplus. By Remark \ref{sustU} this strategy is sustainable if 
$$U_2 > - {\min_{z \in \mathbb R}  g_0(z) \over 1- \alpha} \;. $$
If $\min_{z \in \mathbb R}  g_0(z)<0$, the optimal policy $u^*= U_2$ will be sustainable only if the government has the possibility of increasing taxes and reducing public spending more than a given threshold. In some cases, this could be a challenging task for the government.\\
Clearly,  when $\min_{z \in \mathbb R}  g_0(z)>0$, the optimal strategy is always sustainable.
 \end{remark}

\begin{remark}\label{nof}
Propositions \ref{prop:f(x,z)}, \ref{prop:linear} and Corollary \ref{corollary} show  that in the case where $f(x,z,u) = f(x,z)$ is increasing in $x \in (0, + \infty)$,$\forall z \in \mathbb R$, the minimizer of \eqref{eqn:min_pb} does not depend on the form of the cost function but only on the function $g(z,u)$ which describes the GDP growth rate and the effect of the government policy on it. Thus, given the unitary impact of the fiscal policy on the public debt, the debt-GDP reduction is driven by the effect on the GDP growth rate, which should be the focus of the government attention.
\end{remark}


\section{Explicit solutions for some cases of interest}\label{Applications}
\label{sec:applications}

In this section we discuss two examples which can be solved applying the Verification Theorem.  In the first example we have in mind a country with debt problems, aiming to reduce its debt ratio. In the second application we discuss the debt smoothing problem, that is, the government wishes to smooth the public debt by flattening its deviation from a given threshold. In the first case the running  cost is increasing w.r.t. $x \in (0,+\infty)$, while in the second case we do not require a monotonic condition.

\subsection{The debt reduction problem}\label{debt-red}

We assume that the cost function is given by
\begin{equation}
\label{eqn:power}
f(x,z) = C(z) x^m, \quad m\geq 2 \;,
\end{equation}
where $C\colon\mathbb{R}\to[0,+\infty)$ is a bounded function.
This disutility function generalizes the quadratic function that is widely used in Economics. The parameter $m$ represents the aversion of the government towards holding debt and the importance of debt for the government is modulated by the function $C(z)$, which is a function of the values of the environment stochastic factor $Z$.

For instance, if $Z$ is an indicator of macroeconomic conditions and higher values correspond to better conditions, assuming that $C$ is increasing enables the government to relax fiscal rules and debt reduction goals when a massive government intervention is needed, as during economic crises. That is, $C(z)$ allows for countercyclical policies.\\

Denoting by 
$$G(z)=\inf_{u\in[-U_1,U_2]}\{-(g(z,u)+u) \} \;,$$ 
the HJB equation \eqref{eqn:HJB_extended} reads as
\[
\begin{split}
&(G(z)+r(z))x\frac{\partial{v}}{\partial{x}}(x,z) + f(x,z)
+ \frac{1}{2}\sigma^2x^2\frac{\partial^2{v}}{\partial{x^2}}(x,z)  \\
&+\rho\sigma x\sigma_Z(z)\frac{\partial^2{v}}{\partial{x}\partial{z}}(x,z) 
+b_Z(z)\frac{\partial{v}}{\partial{z}}(x,z) 
+ \frac{1}{2}\sigma_Z(z)^2\frac{\partial^2{v}}{\partial{z^2}}(x,z) - \lambda v(x,z) = 0 \;.
\end{split}
\]
 With the ansatz $v(x,z)=\phi(z) x^m$ this equation reduces to the following ordinary differential equation (ODE)
 \begin{equation}
\begin{split} \label{ODE}
&\frac{1}{2}\sigma_Z^2(z) \phi'' (z) +  (b_Z(z) + \rho\sigma \sigma_Z(z) m ) \phi' (z) \\  
&+[(G(z)+r(z))m - \lambda + \frac{\sigma^2}{2} m(m-1) ] \phi(z) + C(z)=0 \;.
\end{split} 
\end{equation}
We have the following verification result.

\begin{prop}
Let us assume  there exists a strictly positive and bounded solution  $\phi \in C^2(\mathbb R)$ to equation\eqref{ODE}.
Then the value function is given by $v(x,z)=\phi(z) x^m \in C^{2,2}((0, + \infty) \times \mathbb R)$.
Moreover, assuming  $\frac{\partial^2{g}}{\partial{u}^2}(z,u) < 0$ for any $(z,u) \in \mathbb R \times [-U_1,U_2]$ (or $g$ satisfying \eqref{linear1}) then
$u^*=\{u^*(Z_t)\}_{t\ge0}$ with $u^*(z)$ given in \eqref{eqn:u*z} (\eqref {eqn:u*z1}, respectively) is an optimal strategy.
\end{prop}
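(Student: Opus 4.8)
The plan is to verify that the candidate $w(x,z) \doteq \phi(z)x^m$ meets the hypotheses of the Verification Theorem \ref{verification} and then read off both conclusions from it. First I would check the admissibility requirements: since $\phi \in C^2(\mathbb{R})$ we have $w \in \mathcal{C}^{2,2}((0,+\infty)\times\mathbb{R})$; since $\phi$ is strictly positive, $w$ takes values in $[0,+\infty)$; and since $\phi$ is bounded, say $0 < \phi(z) \le \Phi$ for all $z$, the growth bound $|w(x,z)| = \phi(z)x^m \le \Phi(1+|x|^m)$ holds with $C_1 = \Phi$, which is the bound needed in Theorem \ref{verification} (consistently with Assumption \ref{ass:f} for the present $m \ge 2$ and $\lambda > \lambda_m$).

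Next I would substitute $w$ into the HJB equation \eqref{eqn:HJB_extended}. Computing $\frac{\partial w}{\partial x} = m\phi(z)x^{m-1}$, $\frac{\partial^2 w}{\partial x^2} = m(m-1)\phi(z)x^{m-2}$, $\frac{\partial w}{\partial z} = \phi'(z)x^m$, $\frac{\partial^2 w}{\partial z^2} = \phi''(z)x^m$, $\frac{\partial^2 w}{\partial x\partial z} = m\phi'(z)x^{m-1}$, and using $f(x,z) = C(z)x^m$, one observes that the only $u$-dependent term is $x[-g(z,u)-u]\frac{\partial w}{\partial x}$; since $\frac{\partial w}{\partial x} = m\phi(z)x^{m-1} > 0$ on $(0,+\infty)\times\mathbb{R}$, the inner infimum is attained exactly where $-(g(z,u)+u)$ is minimized over $[-U_1,U_2]$, i.e.\ at the value $G(z)$ and at the minimizer $u^*(z)$ of Proposition \ref{prop:f(x,z)} (respectively Proposition \ref{prop:linear} in the linear case \eqref{linear1}). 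Indeed $\frac{\partial v}{\partial x} \neq 0$ is automatic here because $\frac{\partial w}{\partial x} = m\phi(z)x^{m-1} > 0$ (which is also consistent with Proposition \ref{INCR}, $f(x,z)$ being increasing in $x$), while $\frac{\partial^2 g}{\partial u^2} < 0$ or the linear form of $g$ is exactly the standing hypothesis, so $u^*(z)$ as in \eqref{eqn:u*z} (resp.\ \eqref{eqn:u*z1}) is well defined, $[-U_1,U_2]$-valued and measurable. Factoring the common positive factor $x^m$ out of the resulting equation leaves precisely the ODE \eqref{ODE}, which $\phi$ solves by assumption; hence \eqref{eqn:HJBverification2} holds with $u^*(z)$, and Theorem \ref{verification} yields $w(x,z) = v(x,z) = \phi(z)x^m$ together with the optimality of the Markovian control $\{u^*(Z_t)\}_{t\ge0}$ (note this closed-loop control is $\mathbb{F}$-adapted, hence predictable since $Z$ is continuous, $[-U_1,U_2]$-valued, and the associated $X^{u^*,x}$ is the explicit process \eqref{eqn:Xsol}).

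I do not expect a genuine obstacle: the only computational content is the reduction of the substituted HJB equation to \eqref{ODE}, which is essentially the calculation already displayed in the excerpt leading to \eqref{ODE}. The one point to keep in mind is that the two limiting relations used inside Theorem \ref{verification}, namely $\limsup_{T\to+\infty} e^{-\lambda T}\mathbb{E}[w(X^{u,x}_T,Z_T)] \le 0$ for all $u\in\mathcal{U}$ and $\liminf_{T\to+\infty} e^{-\lambda T}\mathbb{E}[w(X^{u^*,x}_T,Z_T)] \ge 0$, are legitimate for $w = \phi(z)x^m$; but these follow at once from $0 \le w \le C_1(1+|x|^m)$, the moment estimate \eqref{eqn:EXm_finite}, and $\lambda > \lambda_m$, exactly as in the proof of that theorem, so nothing new is required.
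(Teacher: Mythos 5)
Your proposal is correct and follows essentially the same route as the paper: verify that $w(x,z)=\phi(z)x^m$ satisfies the hypotheses of Theorem \ref{verification}, note that $\frac{\partial w}{\partial x}=m\phi(z)x^{m-1}>0$ so the inner minimization reduces to that of Proposition \ref{prop:f(x,z)} (resp.\ Proposition \ref{prop:linear}), and observe that the substituted HJB equation collapses to the ODE \eqref{ODE}. The paper states this in one line; you have simply spelled out the same verification in detail, including the transversality limits, so nothing further is needed.
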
 

\begin{proof}
The thesis follows by Theorem \ref{verification} and Proposition \ref{prop:f(x,z)}  (Proposition  \ref{prop:linear}, respectively) 
observing that $\frac{\partial{v}}{\partial{x}}(x,z) = (m-1) \phi(z) x^{m-1}>0$, $\forall  (x,z)  \in (0, + \infty) \times \mathbb R$.  
\end{proof}

In order to obtain an explicit formula for the value function we discuss in next example the simplified case without the presence of the stochastic factor $Z$.

\begin{example}
Let us assume $r(z)=r$, $g(z,u) = g(u)$, $\sigma_Z = b_Z = 0$, $C(z) = C>0$, hence $G(z)=G = \inf_{u\in[-U_1,U_2]}\{-(g(u)+u) \}$ and the HJB  reads as

\begin{equation}
(G+r)xv'(x) + Cx^m
+ \frac{1}{2}\sigma^2x^2v''(x)  - \lambda v(x) = 0 \;.
\end{equation}

It is easy to find an explicit solution to this equation and by the Verification Theorem  \ref{verification} we get that  the value function is given by $v(x) = k x^m$, with

$$k = \frac{C}{ \lambda - (G+r)m - \frac{1}{2}m(m-1)\sigma^2 }>0 \;. $$

As observed in Remark \ref{nof}, we get that the optimal strategy depends only on  the form of the function $g(u)$.  We discuss below three cases.
\begin{enumerate}
\item If $g''(u) <0$ $\forall u \in [-U_1,U_2]$, then by Proposition \ref{prop:f(x,z)}  the optimal control $u^* =  -U_1 \lor \hat{u} \land U_2 $ where $\hat{u}$ the unique solution to $g'(u) = -1$ and precisely  is given by
\begin{equation}
u^*= 
\begin{cases}
	-U_1
	& \text{if}  \; g'(-U_1) < -1
	\\
	U_2 
	& \text{if} \; g'(U_2) > -1
	\\
	\hat{u} & \text{otherwise} \;.
\end{cases}
\end{equation}

\item In the case of \eqref{linear}, $g(u) =g_0 - \alpha u$, the optimal strategy is given by 
\begin{equation}
u^*= 
\begin{cases}
	-U_1
	& \text{if}  \; \alpha >1
	\\
	U_2 
	& \text{if} \;  0<\alpha <1\;.
\end{cases}
\end{equation}
The considerations in Remark \ref{ReC} apply and the optimal strategy is sustainable, when $\alpha >1$ if $U_1 > - {g_0 \over \alpha -1}$ and when $0<\alpha <1$ if $U_2> - {g_0 \over1 -\alpha}$. 

\item$g(u) =g_0$, that is the GDP growth rate is not influenced by the fiscal policy, then the optimal strategy $u^*= U_2$ (the government apply the maximum level of surplus) and it is sustainable if $U_2 > - g_0$. 

\end{enumerate} 
 \end{example}

\subsection{Debt smoothing}

In the previous section we assumed that a cost is associated to any increase of debt, because the aim was to reduce debt. However, in some cases an increase of debt could be more beneficial than its reduction. In \cite{Ostryetal} there is a noteworthy discussion of this topic, focusing on the trade-off between lowering public debt and building public infrastructure. Clearly, the latter should be the greater priority for countries with low debt and big infrastructure needs. Also, there are some unclear cases with high debt, no plausible risk of fiscal distress and some infrastructure needs. In these cases the optimal debt level is unclear, but sometimes the benefit from increasing debt could surpass the advantages of debt reduction. This is even more true at the current juncture, given the very low level of real interest rates and the existence of demand shortfalls. \\
Another reason for the government to not let the public debt fall down to zero is related to the role of debt as savings absorber, see \cite{domar1944}. The government wishes to guarantee at any time a given amount of bonds in order to absorb the private savings in the financial market.\\
Finally, it is well known that sudden and large shocks such as economic crises, wars or pandemics might cause spikes of public debt. ``In developed countries [...] the aim of debt management was to smooth as much as possible the impact of such temporary expenditure shocks that were initially financed by raising debt'' (see \cite{CASALIN2019}). \\

In this section we address the debt smoothing problem. We assume that the government wishes to smooth the public debt by flattening its deviation from a given threshold $\bar{x}>0$. Hence the cost is represented by the (quadratic) distance between the current debt and the target debt $\bar{x}$.  Formally, we assume that the cost function is given by
\begin{equation}
\label{eqn:quadratic_1}
f(x) = (x- \bar{x})^2 \;.
\end{equation}

Recalling Proposition \ref{prop:v_properties} and Proposition \ref{prop:vconvex}, we can derive the following properties of the value function $v(x,z)$: 
 \begin{itemize}
\item $v(x,z) \geq 0$, $\forall (x,z) \in (0, + \infty)\times \mathbb{R}$\,;
\item $v(x,z)$ is convex w.r.t $x \in (0, + \infty)$ $\forall z \in \mathbb{R}$\,;
\item $v(x,z) \leq {x^2\over \lambda - \lambda_2} + {\bar{x}^2\over \lambda}$ $\forall (x,z) \in (0, + \infty)\times \mathbb{R}$\,;
\item $\lim_{x\to 0^+} v(x,z) \leq {\bar{x}^2\over \lambda}$ $\forall z \in \mathbb{R}$. 
\end{itemize}

In order to provide an explicit solution to the HJB equation we discuss the simplified case without the stochastic factor. Precisely, we assume $\sigma_Z=b_Z=0$, $r(z)=r$ and $g(z,u) = g(u)$. Then we denote $G_1 = min_{u\in[-U_1,U_2]}\{-(g(u)+u) \}$ and $G_2 = max_{u\in[-U_1,U_2]}\{-(g(u)+u) \}$, thus the HJB \eqref{eqn:H} reads as
\begin{equation}\label{1}
(G_1+r)xv'(x) + (x- \bar{x})^2
+ \frac{1}{2}\sigma^2x^2 v''(x)  - \lambda v(x) = 0 \;,
\end{equation}
for $v'(x) \geq 0$, and 
\begin{equation}\label{2}
(G_2+r)xv'(x) + (x- \bar{x})^2
+ \frac{1}{2}\sigma^2x^2 v''(x)  - \lambda v(x) = 0 \;,
\end{equation}
for $v'(x) < 0$. Recalling that $v'$ is increasing (because $v$ is convex), we conjecture that there exists a threshold $ \widetilde x>0$ such that $v'(x)>0$ $\forall x >  \widetilde x$ and $v'(x)<0$ $\forall x <  \widetilde x$, so that we make the following ansatz: 
\begin{equation} \label{v}
v(x)= 
\begin{cases}
	a_1x^2 + b_1 x+ c_1  + d_1 x^{\gamma_1}, 
	& \text{if}  \; x \geq \widetilde x
	\\
	a_2x^2 + b_2 x+ c_2  + d_2 x^{\gamma_2}, 	
	& \text{if} \; x < \widetilde x \;.
\end{cases}
\end{equation}
By substituting this expression in the equations \eqref{1} and \eqref{2} above, we get that
\begin{equation}\label{ai}
 a_i  = {1\over \lambda - 2(G_i +r) - \sigma^2}, \;  b_i = {2  \bar{x} \over G_i +r - \lambda}, \; c_i = {\bar{x}^2 \over \lambda}, \; i=1,2 \;,
\end{equation}
\begin{equation}\label{gamma1}
	\gamma_1 = {-[2(G_1+r) - \sigma^2] - \sqrt{ [2(G_1+r) -\sigma^2]^2 + 4 \lambda \sigma^2 }
	\over 2 \sigma^2} <0 \;,
\end{equation}
\begin{equation}\label{gamma2}
	\gamma_2 = {- [2(G_2+r) - \sigma^2] + \sqrt{ [2(G_2+r) -\sigma^2]^2 + 4 \lambda \sigma^2  } 
	\over 2 \sigma^2} >0 \;.
 \end{equation}

Condition \eqref{gamma1} guarantees  the quadratic growth, that is $v(x) \leq C(1 + x^2)$ 
 $\forall x \in (0, + \infty)$, while condition \eqref{gamma2}  implies  that $\lim_{x\to 0^+} v(x) = {\bar{x}^2\over \lambda}$.

We also conjecture that $v$ is twice continuously differentiable. Then the three constants  $d_i, i=1,2$ and $\widetilde x$ can be found by taking the following conditions into account:
\begin{enumerate}
\item $v(\widetilde x^+) = v(\widetilde x^-) $;
\item $v'(\widetilde x^+) = v'(\widetilde x^-) $;
\item $v''(\widetilde x^+) = v''(\widetilde x^-)$.
\end{enumerate}
In view of \eqref{v} these equations read as:
\begin{enumerate}
\item $a_1 \widetilde x^2 + b_1 \widetilde x + d_1 \widetilde x^{\gamma_1} = a_2 \widetilde x^2 + b_2 \widetilde x+  d_2 \widetilde x^{\gamma_2}$;
\item $2a_1 \widetilde x + b_1  + \gamma_1 d_1 \widetilde x^{\gamma_1 -1} = 2 a_2 \widetilde x + b_2 + d_2 \gamma_2 \widetilde x^{\gamma_2 -1} =0$;
\item $2a_1 +  \gamma_1 (\gamma_1 -1) d_1 \widetilde x^{\gamma_1 - 2} = 2 a_2 + d_2 \gamma_2 (\gamma_2 -1 ) \widetilde x^{\gamma_2 -2}$. 
\end{enumerate}

Conditions 2 and 3 are  equivalent to
 \begin{equation}\label{d1} 
 d_1 = d_1(\widetilde x)=  - {2a_1 \widetilde x+ b_1 \over \gamma_1 \widetilde x^{\gamma_1 -1}}
 \end{equation}
and
\begin{equation}\label{d2} d_2 = d_2(\widetilde x)= {2(a_1-a_2) \widetilde x  - (\gamma_1 -1)(2a_1 \widetilde x + b_1)  \over \gamma_2 (\gamma_2 -1 ) \widetilde x^{\gamma_2 - 3}} \;,
\end{equation}
respectively. Then the equation of item 1 can be rewritten as
\begin{equation}\label{x tilde}
a_1 \widetilde x^2 + b_1 \widetilde x + d_1(\widetilde x) \widetilde x^{\gamma_1} = a_2 \widetilde x^2 + b_2 \widetilde x+  d_2(\widetilde x) \widetilde x^{\gamma_2} \;.
\end{equation}
Clearly, we first need to compute $\widetilde x$ by solving numerically equation \eqref{x tilde}, then we obtain $d_1$ and $d_2$ by equations \eqref{d1} and \eqref{d2}, respectively. \\

Applying the Verification Theorem \ref{verification} we obtain the following result.

\begin{prop}
\label{prop:debtsmooth}
Let $a_i$, $b_i$, $c_i$, $\gamma_i$, $i=1,2$ given in equations \eqref{ai} - \eqref{gamma1} - \eqref{gamma2},
$\widetilde x \in (0, + \infty)$ solution to equation \eqref{x tilde}, and $d_1$ and $d_2$ two constants determined by \eqref{d1} and \eqref{d2}, respectively. Define 
 \begin{equation} \label{V}
w(x)= 
\begin{cases}
	a_1x^2 + b_1 x+ c_1  + d_1 x^{\gamma_1}, 
	& \text{if}  \; x \geq \widetilde x
	\\
	a_2x^2 + b_2 x+ c_2  + d_2 x^{\gamma_2}, 	
	& \text{if} \; x < \widetilde x \;.
\end{cases}
\end{equation}

If $\forall x \in (0, + \infty)$ $w''(x) >0$ and (i) $g''(u) \neq 0$ $\forall u \in [-U_1,U_2]$, or (ii) $g$ is given by \eqref{linear}, then $w$ is the value function (i.e. $w=v$) and $u^* = \{u^*(X^{u^*}_t)\}_{t\geq 0}$ is an optimal strategy, where the function $u^*(x)$ is given by: \\

Case (i)  
\begin{equation}  
u^*(x) = 
\begin{cases} -U_1
& x \in\Set{x\in(0,+\infty) \mid  0
	\ge w'(x)(g'(-U_1) + 1) }
	\\
	U_2 
	& x \in\Set{x\in(0,+\infty) \mid  0 \le w'(x)(g'(U_2) + 1) }	\\
	\hat{u} & \text{otherwise} \;,
\end{cases}
\end{equation}
where $\hat{u}$ is the unique solution to
\[
g'(u) = -1 \quad u \in [-U_1, U_2]\;.
\]

Case (ii)  
\begin{equation} \label{u_ex}
 u^*(x) = 
\begin{cases} -U_1
& x \in\Set{x\in(0,+\infty) \mid  w'(x)(\alpha -1) \geq 0 }
	\\
	U_2 
	& x \in\Set{x\in(0,+\infty) \mid   w'(x)(\alpha -1) <0 } \;.
	\end{cases}
\end{equation}
\end{prop}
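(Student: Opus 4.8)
The plan is to apply the Verification Theorem \ref{verification} with $w$ playing the role of the classical solution and $m=2$. Concretely I would check, in order: (i) that $w\in C^2(0,+\infty)$; (ii) that $w$ has at most quadratic growth, so that the hypothesis $\abs{w(x)}\le C_1(1+\abs{x}^2)$ of Theorem \ref{verification} is met; (iii) that $w$ is a classical solution of the HJB equation \eqref{eqn:HJB_extended} in the present $Z$-free setting; and (iv) that the feedback map $x\mapsto u^*(x)$ in the statement realizes the infimum in \eqref{eqn:HJB_extended}. Once these are in place, Theorem \ref{verification} yields at once $w=v$ and the optimality of the Markovian control $\{u^*(X^{u^*}_t)\}_{t\ge0}$.

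Step (i) is essentially built into the construction: on each of the two open half-lines $w$ coincides with a $C^\infty$ function, and the relations \eqref{d1}, \eqref{d2}, \eqref{x tilde} encode precisely the $C^0$, $C^1$, $C^2$ matching of $w$ at $\widetilde x$ (conditions 1--3 in the text), so $w$ is of class $C^2$ across the junction. For step (ii) I would use that the polynomial parts in \eqref{V} are quadratic, that $x^{\gamma_1}$ is bounded on $[\widetilde x,+\infty)$ since $\gamma_1<0$ by \eqref{gamma1}, and that $x^{\gamma_2}\to0$ as $x\to0^+$ since $\gamma_2>0$ by \eqref{gamma2}; hence $w$ is bounded on $(0,\widetilde x]$ and $\abs{w(x)}\le C_1(1+x^2)$ on $(0,+\infty)$, which is consistent with $f(x)=(x-\bar x)^2\le C(1+x^2)$ and $\lambda>\lambda_2$ in Assumption \ref{ass:f}.

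Step (iii) is the core. From the $C^1$-matching condition one has $w'(\widetilde x)=0$, and since $w''>0$ by hypothesis $w'$ is strictly increasing, so $w'(x)>0$ for $x>\widetilde x$ and $w'(x)<0$ for $x<\widetilde x$, confirming the conjectured sign pattern at the free boundary. In \eqref{eqn:HJB_extended} the only $u$-dependent term is $\inf_{u}\{x[-g(u)-u]\,w'(x)\}$, which equals $x\,w'(x)\min_u\{-(g(u)+u)\}=x\,w'(x)\,G_1$ when $w'(x)>0$ and $x\,w'(x)\max_u\{-(g(u)+u)\}=x\,w'(x)\,G_2$ when $w'(x)<0$; therefore the HJB equation reduces to \eqref{1} on $(\widetilde x,+\infty)$ and to \eqref{2} on $(0,\widetilde x)$, while at $x=\widetilde x$ the $u$-term vanishes and both collapse to the same equation. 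By the choice of $a_i,b_i,c_i,\gamma_i,d_i$ in \eqref{ai}, \eqref{gamma1}, \eqref{gamma2}, \eqref{d1}, \eqref{d2}, obtained by inserting the ansatz \eqref{v} into \eqref{1}--\eqref{2}, $w$ solves each ODE on its region, and the $C^2$-matching at $\widetilde x$ closes the gap; hence $w$ is a classical solution of \eqref{eqn:HJB_extended}.

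Finally, for step (iv) the inner minimization is that of $H(x,u)=-x\,w'(x)\,(g(u)+u)+f(x)$ over $u\in[-U_1,U_2]$. In case (ii), $g$ affine makes $H$ affine in $u$, so the minimizer is the endpoint selected by the sign of $w'(x)(\alpha-1)$, exactly as in Proposition \ref{prop1bis_f(x,z)}, giving \eqref{u_ex}. In case (i), $g''\ne0$ makes $g'$ strictly monotone, so $g'(u)=-1$ has the unique root $\hat u$, and analyzing $\partial_u H(x,u)=-x\,w'(x)(g'(u)+1)$ on each region according to the sign of $w'(x)$, as in Propositions \ref{prop:optimal_Hconvex} and \ref{prop1_f(x,z)}, produces the three-branch formula; at the single point $x=\widetilde x$ where $w'$ vanishes $H$ is constant in $u$, so the value assigned there is immaterial. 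The resulting $u^*$ is measurable because $w'$ and $g'$ are continuous, and Theorem \ref{verification} then applies. The step I expect to be most delicate is step (iii): one must be careful that the reduction of \eqref{eqn:HJB_extended} uses the correct extremal value ($G_1$ versus $G_2$) on each side of $\widetilde x$ --- which is exactly where $w''>0$ is indispensable --- and that the junction $x=\widetilde x$ is handled consistently; the case (i) characterization also requires care in tracking how the sign of $w'$ interacts with the concavity/convexity of $u\mapsto g(u)+u$.
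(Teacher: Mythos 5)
Your proposal follows essentially the same route as the paper: verify the hypotheses of the Verification Theorem by checking that the matching conditions \eqref{d1}, \eqref{d2}, \eqref{x tilde} make $w$ of class $C^2$ with $w'(\widetilde x)=0$, use $w''>0$ to fix the sign of $w'$ on either side of $\widetilde x$ so that the HJB equation reduces to \eqref{1} and \eqref{2} on the respective regions, confirm the quadratic growth, and identify the minimizer via the earlier propositions. Your write-up is in fact more explicit than the paper's (which compresses the HJB-reduction step into ``by construction''), and the delicate point you flag --- that in case (i) the sign of $w'$ can flip the convexity of $u\mapsto H(x,\cdot)$, so the three-branch formula needs separate justification on $(0,\widetilde x)$ where $H$ may be concave --- is a genuine subtlety that the paper's own proof also leaves implicit.
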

\begin{proof}
It is sufficient to show that all the conditions of the Verification Theorem \ref{verification} are satisfied. 
First we prove that  $w$ is twice continuous differentiable, by construction we have that  $w(\widetilde x^+) = w(\widetilde x^-)$, $w''(\widetilde x^+) = w''(\widetilde x^-)$ and $w'(\widetilde x^+)=0$.  Let us observe that $w'(\widetilde x^-)= w'(\widetilde x^+)=0$ because $w$ is twice differentiable on $(0, + \infty)$.
Since $w'$ is strictly  increasing, we have that $w'(x) \geq 0$ for $x \geq  \widetilde x$ and
 $w'(x) <0$  for $x <\widetilde x$. $w$ solves the HJB equation \eqref{eqn:H} and satisfies the quadratic growth condition by construction.  Then the statement follows by Propositions \ref{prop1_f(x,z)}, \ref{prop1bis_f(x,z)} and the Verification Theorem \ref{verification}.
  \end{proof} 

\begin{remark}
\label{rem:ubinaria}
Let us observe that \eqref{u_ex} reads as:
\begin{itemize}
\item for $0 < \alpha <1$:
\begin{equation} \label{u_ex1}
 u^*(x) = 
\begin{cases}U_2
& \text{if} \,  x \geq \widetilde x
	\\
	- U_1 
	& \text{if} \,  x < \widetilde x \;;
	\end{cases}
\end{equation}
\item for $\alpha > 1$
 \begin{equation} \label{u_ex2}
 u^*(x) = 
\begin{cases} -U_1
& \text{if} \, x \geq \widetilde x
	\\
	U_2 
	& \text{if} \, x < \widetilde x \;.
	\end{cases}
\end{equation}
\end{itemize}
We can distinguish two cases. When the impact of fiscal policy on GDP growth is low (i.e. $0<\alpha<1$), the government applies the maximum surplus, increasing taxes and decreasing the public spending, as long as the current debt-GDP ratio is over the threshold $\widetilde x$. When $X_t$ is below $\widetilde x$, the maximum deficit is applied. This simple rule is reversed when $\alpha>1$, that is when the fiscal policy is more effective on GDP growth than on debt.\\
\end{remark}

In the sequel we perform some numerical simulations to further investigate the results of Proposition \ref{prop:debtsmooth}. In particular, we refer to the case (ii), when the GDP growth rate takes the form of equation \eqref{linear}, that is 
\[
g(z,u) = g_0 - \alpha u \;.
\]
We consider the parameters in Table \ref{tab:parameters} below as a reference scenario, unless otherwise specified.

\begin{table}[htp]
\caption{Simulation parameters}
\label{tab:parameters}
\centering
\begin{tabular}{ll}
\toprule
\textbf{Parameter} & \textbf{Value}\\
\midrule
$r$ & $0.01$ \\
$g_0$ & $0.03$\\
$\sigma$ & $0.2$\\
$\bar{x}$ & $0.6$\\
$U_1=U_2$ & $1$\\
$\alpha$ & $0.9$\\
$\lambda$ & $5$\\
\bottomrule
\end{tabular}
\end{table}

The main task is to solve numerically \eqref{x tilde}. Then we can investigate how the result is sensitive to the model parameters.\\

For instance, given the parameters as in Table \ref{tab:parameters}, we can find a solution to equation \eqref{x tilde}, that is $\tilde{x}=0.6194$, so that the value function is well defined and it is illustrated in Figure \ref{img:vfun}. The reader can easily notice that it is convex, as expected.

\begin{figure}[htp]
\centering
\includegraphics[width=\textwidth]{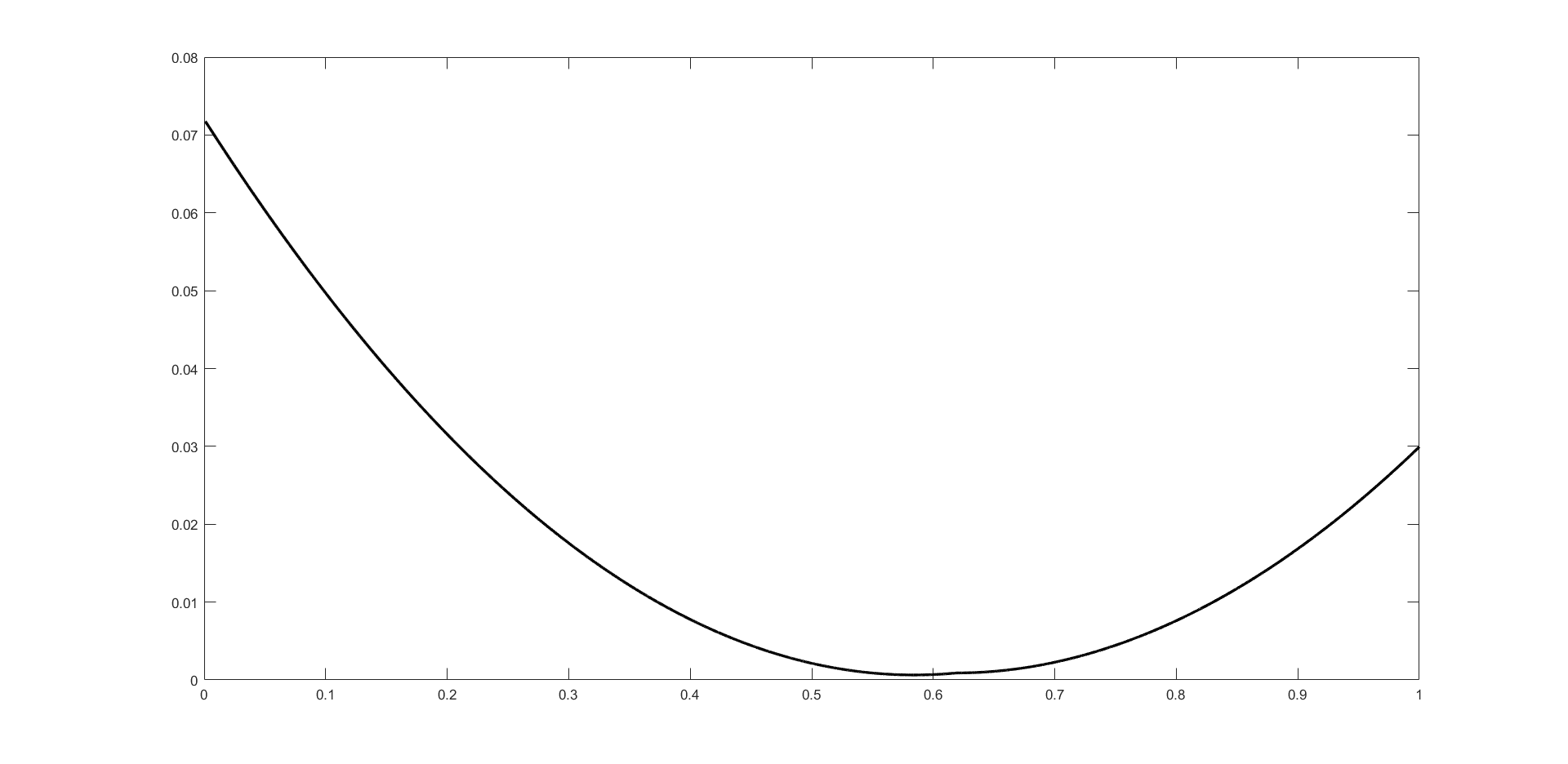}
\caption{The value function given by equation \eqref{V}.}
\label{img:vfun}
\end{figure}

Recalling \eqref{eqn:X0} we get that the uncontrolled debt ratio is given by  
$$X^{0,x}_t = xe^{(r-g_0 - \frac{1}{2}\sigma^2) t   + \sigma W_t}, \quad t \geq 0 \;,$$
hence we  call strong economy countries with parameters  satisfying  $g_0 + {\sigma^2\over 2} > r$ and weak economy when the opposite  inequality holds, that is  $g_0 + {\sigma^2\over 2} < r$ (e.g. in \cite{CA2018} the same definition is used).  

In Table \ref{tab:strongweak} we perform a comparison between countries with different economy parameters; precisely, we consider the parameters $r=0.01, g_0=0.03, \sigma=0.2$ corresponding to a strong economy and $r=0.07, g_0=0.015, \sigma=0.3$ for a weak economy. 

\begin{table}[htp]
\caption{Comparison between strong and weak economies}
\label{tab:strongweak}
\centering
\begin{tabular}{lll}
\toprule
\textbf{Parameters} & \textbf{Strong Economy} & \textbf{Weak Economy}\\
\midrule
$\alpha = 0.9$ & $\tilde{x}=0.6194$ & $\tilde{x}=0.6241$\\
$\alpha = 0.95$ & $\tilde{x}=0.6052$ & $\tilde{x}=0.5941$\\
$U_1=U_2=0.8$ & $\tilde{x}=0.6125$ & $\tilde{x}=0.6068$\\
$U_1=U_2=0.5$ & $\tilde{x}=0.6052$ & $\tilde{x}=0.5941$\\
\bottomrule
\end{tabular}
\end{table}

First, we noticed that the solution $\tilde{x}$ is symmetric with respect to $\alpha=1$, i.e. $\tilde{x}$ for $\alpha=0.9$ and $\alpha=0.95$ is the same as for $\alpha=1.1$ and $\alpha=1.05$, respectively. We can observe that when the effect of fiscal policy on GDP, which is measured by $\alpha$, is closer to the effect on debt, which is a unitary, then the threshold $\tilde{x}$ decreases. In particular, for the weak economy this phenomenon is more evident and $\tilde{x}$ becomes lower than the target debt $\bar{x}$. Moreover, when the fiscal margin is reduced ($U_1$ and $U_2$ go from $1$ to $0.8$ and $0.5$) we can see that $\tilde{x}$ decreases, becoming closer to $\bar{x}$. Hence, when the fiscal policy is less effective, the government should intervene even when minimal deviations from the target are observed.\\

To clarify the government behavior and give a deeper insight into Remark \ref{rem:ubinaria}, we simulate a trajectory of debt-GDP ratio with the corresponding optimal strategy. 

Figure \ref{img:Xcontrolled} illustrates a trajectory of the (optimally) controlled Debt-GDP. Recall that in this case $0 < \alpha<1$, hence the optimal strategy follows equation \eqref{u_ex1}. Let us recall that here we have $\tilde{x}=61.94\%$, while our target is $\bar{x}=60\%$. For instance, let us comment the first two months as illustrated in the picture. Starting from a debt-GDP ratio of $70\%$, the government applies the maximum surplus $U_2$ according to the optimal strategy. We observe the debt decreasing and after 51 days the debt hits the threshold $\tilde{x}$, so that the maximum deficit $-U_1$ is applied. Then the debt increases again and the government will stop spending when the threshold is hit again.

\begin{figure}[htp]
\centering
\includegraphics[width=\textwidth]{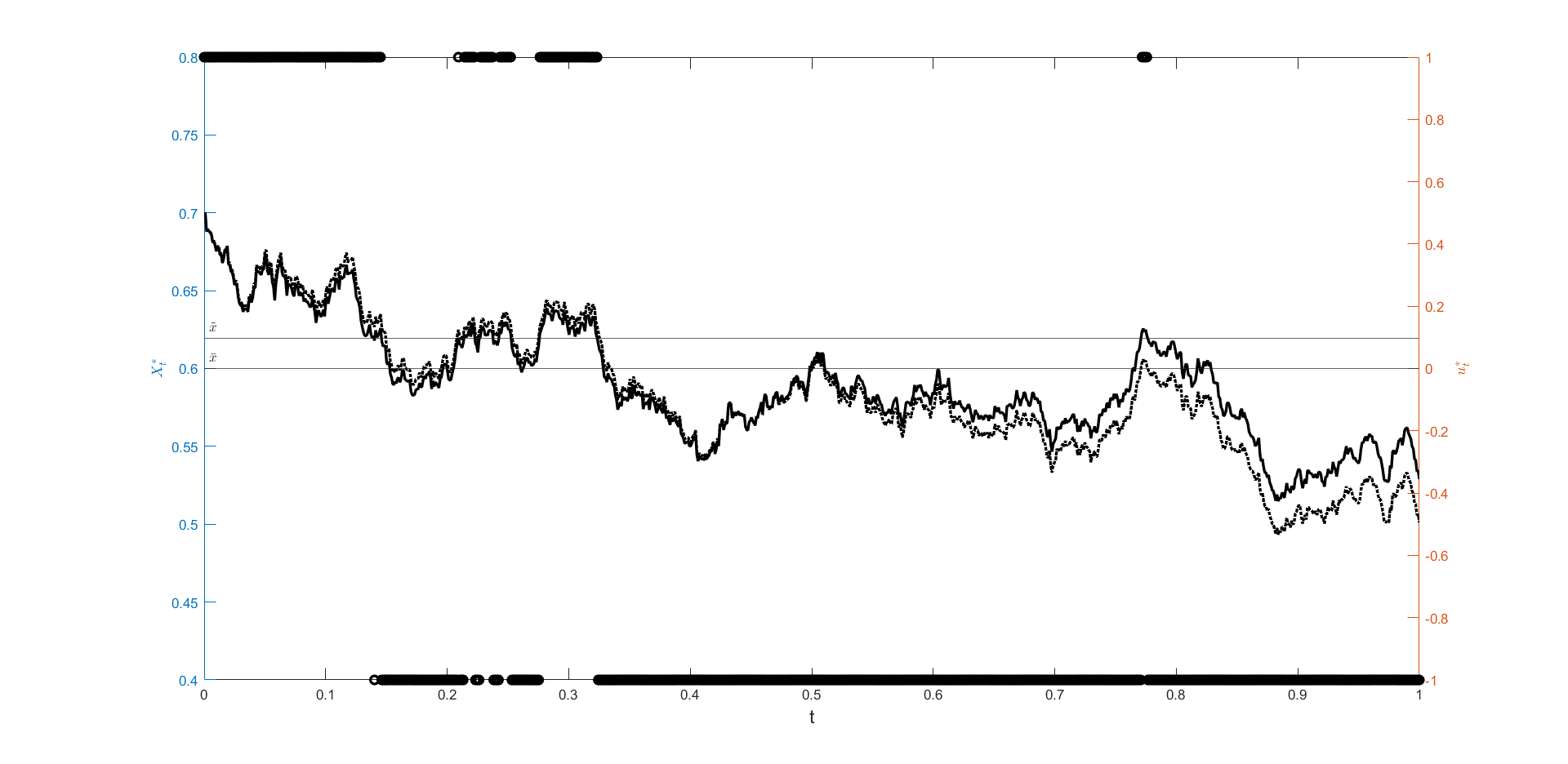}
\caption{A simulation of the Debt-GDP ratio (left y axis) with the corresponding optimal strategy (right y axis). Here $\tilde{x}=61.94\%$ and $\bar{x}=0.60\%$. The solid line represents the controlled process, while the dotted line is the Debt-GDP ratio without intervention.}
\label{img:Xcontrolled}
\end{figure}
%

\section*{Conclusions}

The present article introduces a new dynamic stochastic model for the debt-GDP ratio, see equation \eqref{eqn:debt}, trying to fill the gap between the economic theory and the mathematical model formulation. In particular, the expansionary/recessionary role of the fiscal policy on the GDP were not fully recognized by some recent research papers, while this phenomenon is well known in Economics. 
\\
We study a general class of debt management problems as a stochastic control problem with a generic functional cost, depending on debt-GDP ratio, an exogenous stochastic factor $Z$ and the fiscal policy. In Section \ref{optimal_stra} we provide a characterization of the value function as a viscosity solution of the HJB equation, see Theorem \ref{thm:valuefun_viscosity}. Moreover, a verification theorem (see Theorem \ref{verification}) is proved when a classical solution of the HJB equation exists. Then in Section \ref{optimal_stra} we investigate the optimal candidate fiscal policy in many cases of interest. Some practical applications have been analyzed in Section \ref{sec:applications}.\\

Our results give many interesting insights on the government interventions for managing public debt issues. When the fiscal policy $u$ has a nonlinear impact on GDP growth, the government will select the primary balance in order that the effect on GDP matches that on debt. When this is not possible, the extreme policies (maximum deficit or maximum surplus) are chosen (see Proposition \ref{prop1_f(x,z)}). When the effect of $u$ on GDP is linear, a bang-bang strategy is obtained. \\
When the cost functional does not depend on $u$ and it is increasing in $x$ (e.g. for debt reduction), we proved that the maximum surplus $U_2$ is applied only when its marginal negative effect on GDP is lower than the positive effect on debt reduction. Similarly, the maximum deficit is optimal when its marginal beneficial effect on GDP surpasses the negative effect of increasing debt (see Proposition \ref{prop:f(x,z)}).\\

In general, it turns out that the impact of the fiscal policy on GDP is crucial to determine the optimal taxation/spending level and the corresponding optimal debt-GDP ratio. In particular, depending on the macroeconomic conditions (described by $Z$) and the magnitude of the impact of the fiscal policy on GDP, the optimal debt management can be achieved by deficit policies in some circumstances. This is still true when the government goal is debt reduction, as shown in Section \ref{debt-red}.

\section*{Declarations}

\textbf{Funding}
The authors work has been partially supported by the Project INdAM-GNAMPA, number: U-UFMBAZ-2020-000791.\\
\textbf{Conflicts of interest/Competing interests}
The authors declare no conflict of interest.\\
\textbf{Availability of data and material}
Not applicable.\\
\textbf{Code availability}
Not applicable.\\
\textbf{Ethics approval}
Not applicable.\\
\textbf{Consent to participate}
Not applicable.\\
\textbf{Consent for publication}
Not applicable.\\


\newpage
\bibliographystyle{abbrv}
\bibliography{biblio}

\end{document}